\DeclareMathOperator*{\diag}{diag}
\newtheorem{theorem}{Theorem}
\newtheorem{lemma}{Lemma}
\newtheorem*{definition}{Definition}
\newtheorem{proposition}{Proposition}
\newtheorem{corollary}{Corollary}
\theoremstyle{definition}
\newtheorem{example}{Example}
\theoremstyle{remark}
\newtheorem{remark}{Remark}
\newcommand{\continueexample}[1]{%
  \begin{example}[\textbf{continued}]
  #1
  \end{example}
}
\DeclareMathOperator*{\argmax}{arg\,max}
\DeclareMathOperator*{\argmin}{arg\,min}
\definecolor{darkblue}{rgb}{0.0, 0.0, 0.55}
\title{Equity pay in networked teams}
\author{Krishna Dasaratha \and Benjamin Golub \and Anant Shah}
\thanks{Dasaratha: krishnadasaratha@gmail.com, Boston University. Golub: ben.golub@gmail.com, Northwestern University. Shah: anantshah2026@u.northwestern.edu, Northwestern University. We thank Vitalii Tubdenov for excellent research assistance. We are grateful (in random order) to Ilya Segal, George Mailath, Aravindan Vijayaraghavan, Omer Tamuz, Marina Halac, Thomas Steinke, Marzena Rostek, Juan Ortner, Alex Wolitzky, Jason Hartline, Evan Sadler, Adam Szeidl, Michael Ostrovsky, Melika Liporace, and Michael Powell, as well as many seminar and conference participants, for valuable conversations and comments.}
\newcommand{\production}{Y}
\newcommand{\residualprofits}{V}
\newcommand{\actions}{\mathbf{a}}
\newcommand{\eqlbactions}{\mathbf{a}^{*}}
\newcommand{\actionagt}[1]{a_{#1}}
\newcommand{\network}{\mathbf{G}}
\newcommand{\bonacichendo}{\mathbf{M}}
\newcommand{\subnetwork}{\widetilde{\mathbf{G}}}
\newcommand{\sharesvector}{\boldsymbol{\sigma}}
\newcommand{\optsharesvector}{\boldsymbol{\sigma}^{*}}
\newcommand{\sharesmatrix}{\mathbf{\Sigma}}
\newcommand{\optsharesmatrix}{\mathbf{\Sigma}^{*}}
\newcommand{\sharesagt}[1]{\sigma_{#1}}
\newcommand{\optsharesagt}[1]{\sigma^{*}_{#1}}
\newcommand{\nghbd}[1]{N(#1)}
\begin{document}

\begin{abstract}
A group of agents each exert effort to produce a joint output, with the complementarities between their efforts represented by a (weighted) network. Under \emph{equity compensation}, a principal motivates the agents to work by giving them shares of the output. We describe the optimal equity allocation.  It is characterized by a \emph{neighborhood balance} condition: any two agents receiving equity have the same (weighted) total equity assigned to their neighbors. We also study the problem of selecting the team of agents who receive positive equity, and show this team must form a tight-knit subset of the complementarity network, with any pair being complementary to one another or jointly to another team member. Finally, we give conditions under which the amount of equity used for compensation is increasing in the strength of a team's complementarities and discuss several other applications.

\end{abstract}

\date{\today}
\maketitle

\section{Introduction}
\label{sec:introduction}

A popular method of motivating the members of a team to work toward a common goal is giving them shares of the output that their collaboration will achieve---\emph{equity compensation}. How should a principal design equity compensation for a team? How should this design depend on the structure of the team?  

We analyze this problem in a standard network game model of production with heterogeneous complementarities. A group of workers pursues a joint project, which succeeds with a probability depending on the team's performance and fails otherwise. Each worker chooses a hidden effort, represented as a nonnegative real number, at a convex cost. Each worker's effort makes a ``standalone'' contribution to the team performance, independently of anyone else's effort. But there are also production complementarities: the impact on team performance of a worker's effort is increasing in the efforts of  specific other, complementary, workers.  In our model, the pattern of these complementarities is represented by an arbitrary, exogenously given \emph{network}. We adapt a canonical model of the effort-provision game, in which  equilibrium efforts are uniquely 
determined by tractable formulas once individuals' incentives for effort are specified \citep*{BCZ-06}.

Our contribution is to solve for the principal's optimal contract, which takes the form of an allocation of equity stakes in the firm's returns. More precisely, there is a principal who allocates the output of a project in case of success, giving each worker a certain percentage. These payments, along with the technology of production described above, induce a network game, whose equilibrium outcomes determine effort levels and team performance.  The principal's problem is therefore to design a contract in this moral hazard setting with network spillovers. The effective network of incentive spillovers endogenous that governs equilibrium effort depends on the exogenously given complementarities but is also shaped by the principal's choice of contract. This makes solving the principal's problem much less straightforward than analyzing a network game with exogenously given spillovers.

Our results can be divided into three main categories. The first set concerns the intensive margin: among those agents who get equity, how much should they get? How do their positions in the network of complementarities determine their optimal equity pay? The second set concerns the extensive margin. It turns out that it need not be optimal to give all agents positive equity shares, and the optimal team induced to work may be a subset of the potential contributors. We study the problem of choosing this optimal team. Then, armed with a characterization of optimal equity allocations on both the intensive and extensive margins, the third set of results focuses on some implications of applied interest, examining how equity shares and outcomes vary with the network and the strength of complementarities.

\subsection*{Intensive margin.} We first study the intensive margin of the principal's contracting problem. \emph{Active} agents in a certain allocation are defined to be those who receive positive equity; these turn out to be precisely those agents who exert positive effort in equilibrium (since the incentive to work comes only from equity pay). Our first set of results analyzes the optimal allocation among these agents, showing that it satisfies a simple \emph{neighborhood balance} condition. Consider the weighted sum of equity shares allocated to the neighbors of an active agent, where the weights are the complementarities between an agent and the neighbor. Our first result, which underlies our analysis of the intensive margin, states that this weighted sum does not depend on the agent's identity---i.e., the weighted sum is the same for all active agents. 

A notable property implied by this is that agents' equilibrium efforts under it are proportional to their equity shares. This property is interesting in that it does not hold for a generic equity allocation, but we show that it must be satisfied for the principal's allocation to be optimal. The optimal allocation also features \emph{balanced neighborhood actions}: the total actions of an agent's neighbors, weighted by the strength of their complementarities, is the same for all active agents. We show the balanced neighborhood equity and action conditions are together equivalent to the principal's first-order conditions. Satisfying these turns out to balance the spillover effects of eliciting more effort
across active agents and to equalize the value of inducing any active agent to work more.

The endogenous equalization of incentives and activity across active neighborhoods is notable. In standard models of games with network complementarities (see below for a brief discussion of the literature), a theme is that more central agents and communities endogenously have higher incentives and higher neighborhood activity. In our setting, within the set of active agents, this inequality is endogenously muted: the principal has incentives to allocate equity so as to balance out both incentives and activity at the neighborhood level in the way we have described. {In our discussion of the literature below, we comment on how this changes the analysis relative to canonical studies of network games.}

The neighborhood balance result permits an explicit characterization of the share of equity each agent receives under the optimal allocation. The result can be reformulated as stating that an agent's share is proportional to a certain measure of that agent's centrality in the subnetwork of active agents. A vector $\mathbf{x}$ is called an \emph{equity centrality vector} for a network with weight matrix $\mathbf{W}$ if $\mathbf{W} \mathbf{x} = \mathbf{1}$, where $\mathbf{1}$ is the column vector of all ones. Equity centralities exist and are uniquely determined ($\mathbf{x}=\mathbf{W}^{-1}\mathbf{1}$) whenever $\mathbf{W}$ is invertible---a property that holds generically.  

\subsection*{Extensive margin.} The above results fully characterize optimal equity allocations and associated equilibria given an active set, i.e., a set of agents receiving positive equity shares at the optimal allocation. To characterize optimal contracts fully, however, we must also optimally choose the active set, which  need not contain all agents. This discrete optimization problem requires new insights beyond the intensive margin analysis.

We approach the problem by first reducing it to a quadratic program. This allows us to deduce two results implying considerable structure on the active set. Substantively, we find that highly connected subnetworks are optimal for the principal. Our first result on this is that the active set under any optimal allocation always has diameter at most two in the complementarity network, meaning that any two active agents either have complementarities with each other or both have complementarities with some shared active neighbor. Our interpretation of this result is that an optimal active set should be sufficiently ``tight-knit.'' We give a rough intuition for this result. When equity is given to members of a tight-knit group, the incentives given to one member also motivate effort by the others due to spillovers. On the other hand, if a given amount of equity is given to two subsets of agents that are not tightly linked in the complementarity network, then the equity given to one subset dilutes the incentives of the other without a strong counteracting beneficial effect of spillovers. Thus, a principal prefers to ``concentrate'' incentives and focus them on a single highly complementary group. 
This force is seen even more sharply in our second main result in this section, when we restrict attention to the standard benchmark of unweighted networks---in which all non-zero complementarities have the same strength. In this case, any maximum clique (a subnetwork with complementarities between all pairs of agents) is an optimal active set. 

Our overall interpretation of these results is that when a firm relies on a single joint outcome to provide incentives, teams with dense or tightly-knit complementarities outperform more dispersed teams. A further implication is that the principal often prefers to make a small team exert large efforts in order to make the most use of complementarities, rather than eliciting less effort from a larger group with more diffuse complementarities. Under convex effort costs, this can entail a considerable loss in agent welfare compared with other policies yielding somewhat less output.

\subsection*{Implications.}  Our final set of results explores some implications of our theoretical characterizations motivated by applied questions about the structure of teams and their compensation.

An important set of questions concerns how the optimal allocation and the probability of success depend on the network and the strength of complementarities. Our expression for equity centrality lets us calculate explicitly how the optimal shares vary as the network changes.  These comparative statics show that equity centrality can behave quite differently from measures such as Bonacich centrality, relevant in network games with exogenous incentives. In particular, monotonicity properties, whereby strengthening one's network links necessarily increases centrality, do \emph{not} hold for equity centrality. We illustrate this non-monotonicity and others in examples of three-agent networks. An implication is that investments that strengthen complementarities, though they may be beneficial for aggregate output, are not necessarily in agents' own interests---and this can occur even if the investments are not costly for the agents. 

Questions of network design are interesting more generally. Which links are most valuable to the principal? To make some progress toward understanding this issue, we ask how strengthening links between agents affects the team's probability of a successful outcome. The value to the team of strengthening a link takes a surprisingly simple form: it is proportional to the product of the equity shares allocated to the two agents. So if the principal can strengthen some complementarities, it is most valuable to focus on connections between agents who are already (equity) central. 

A final, practically important, question is how much equity a firm should devote to compensation. The trade-off is that allocating more equity to the team elicits more effort from them, but the principal gets a smaller share of the resulting pie. How a principal manages this trade-off depends on the environment, including the network and the strength of complementarities. Our last result gives conditions under which a profit-maximizing firm wants to distribute more shares to agents when complementarities are stronger. Intuitively, greater complementarities in production increase the returns to allocating shares to agents, since each share now drives more additional effort through spillovers. Establishing this for general complementarity networks turns out to be subtle. 

\subsection*{Literature}
We close with a brief discussion of relevant literature and the nature of our contribution. The literature on network games is extensive, going back to seminal papers including \citet*{goyal2003networks}, \citet{BCZ-06}, and \citet*{galeotti2010network}. The quadratic model of effort that we study has emerged as a focal point due to its tractability, connections to network centrality measures that are of independent interest, and amenability to empirical work. Much recent research has occurred even since the latest major surveys, such as \citet{jackson2015games}, \citet{bksurvey}, and \citet{zenou2016key}.\footnote{For some recent work on this type of model and closely related ones, see, e.g., \citet*{bramoulle2022loss}, \citet*{frick2022dispersed}, 
 \citet*{matouschek2022organizing}, and \citet*{cerreia2023dynamic}.} However, the study of how network complementarities interact with the design of incentive schemes---while a topic of obvious theoretical interest and practical relevance---is in its early stages. For example, \citet{belhaj2018} studies a problem where the principal must target a single agent and offer a contract. \citet{shi2022optimal} studies a model in which the network affects output via a distinct ``helping effort'' that agents can exert to change others' marginal costs of effort (rather than direct complementarities as in our model).   

We contribute to this literature both by posing  a simple optimal contracting  problem in a canonical network games model and by deriving a sharp description of incentives and behavior at the optimum quite different from any appearing in the works just mentioned. At a technical level, our problem has an interesting complication. Most network game analyses have a fixed network of spillovers, describing how an agent's optimal action (or some other analogous variable) depends on others' actions. To the extent that planner interventions of various kinds on nodes' incentives are studied in these network game models---as for example in \citet*{galeotti2020targeting}, \citet*{leister2022social},  or \citet{parise2023graphon}---the interventions typically affect a node attribute and do not change the spillover network. However, in our setting,   when a principal varies the equity stakes that different agents hold, the effective network of spillovers determining equilibrium behavior also changes. For example, when an agent gets a larger share of the group output, he cares more about the joint output with every collaborator, making him more strategically sensitive to those collaborators' efforts. The resulting endogeneity makes the principal's optimization problem substantially richer than it would be with a fixed spillover network. It is therefore interesting that the model nevertheless affords a simple characterization of optimal interventions in terms of the exogenous complementarity network.

Our work also ties into a large economics literature on the design of incentives, going back to Holmstrom's [\citeyear{holmstrom1982moral}] seminal contribution on incentives for teams when individual effort is not observable or not contractible. Within this literature we are closest to \cite{bernstein2012contracting}, which analyzes optimal incentives to induce all agents to exert effort in a binary-action game with network spillovers.\footnote{Related implementation problems are studied in \cite*{halac2020raising}, who consider heterogeneous agents without a network structure, and \cite{Lu2022}, who allow network monitoring rather than network spillovers.} 
We note two differences. First, the binary-action game in \cite{bernstein2012contracting} admits multiple equilibria for many parameters, and the focus of their analysis is full implementation of the maximal action profile. We instead consider a framework with a unique equilibrium and design incentives that maximize performance at that equilibrium. Second, network structure affects the optimal contract in \cite{bernstein2012contracting} primarily through asymmetry in links: on undirected networks, there is a lot of multiplicity in optimal contracts---with one such contract for every possible ranking of agents. In our model, by contrast, optimal contracts on a given undirected network depend intricately on the network structure and typically must motivate particular agents more than others.

\section{Model}
\label{sec:model}

We consider a model with one principal and $n$ agents, $N=\{1,2,\ldots,n\}$. These agents take real-valued actions $\actionagt{i} \geq 0$. Denote the joint action profile by $\actions=(\actionagt{1},\dots,\actionagt{n})$. To represent the complementarities among the agents, we define a weighted network with adjacency matrix $\network$, so $G_{ij} \geq 0$ is the weight of the link from $i$ to $j$. The neighborhood of agent $i$ is $\nghbd{i} = \left\{ j: G_{ij} > 0\right\}$. We call a network unweighted if $G_{ij} \in \{0,1\}$ for all $i$ and $j$.

Agents jointly work on a project which either succeeds or fails. The project outcome depends on agents' actions $\actions$, the network $\network$, and a parameter $\beta > 0$ measuring the strength of complementarities between agents. Let $S \in \{0,1\}$ be a binary variable corresponding to project success. We assume the probability of success is $P(\production)$, where $\production(\actions)$ is called the \emph{team performance} and $P: \mathbb{R}_{\geq 0} \rightarrow [0,1)$ is strictly increasing, concave, and twice differentiable. The team performance is the sum of a term that is linear in actions---corresponding to agents' standalone contributions---and a quadratic complementarity term:
$$\production(\actions) = \sum_{i\in N}\actionagt{i} + \frac{\beta}{2}\sum_{i,j \in N} G_{ij}a_i a_j. $$
A successful project produces an output, whose value we normalize to one, whereas a failed project produces a value equal to zero.

Throughout, we take the matrix $\network$ to be symmetric, or equivalently the network to be undirected. Because all payoffs will only depend on $\network$ through the team performance $\production(\actions)$, this assumption is without loss of generality (as we can replace $\mathbf{G}$ with $(\mathbf{G}+\mathbf{G}^T)/2$ without changing team performance). We also assume $\network$ is not identically zero.

The principal observes the project outcome but does not observe agents' actions. (When we use pronouns, we use ``she'' for the principal and ``he'' for an agent.) To incentivize effort, the principal offers a contract, which specifies a non-negative transfer $t_i(S)$ to each agent that can depend on the project outcome.  Agents maximize the expectation of the following payoff, which is quasi-linear in monetary transfers and has a quadratic cost of effort:  $$ u_i = t_i(S) - \frac{a_i^2}{2}.$$ 
The environment (including the network and contract) are common knowledge among agents, and the network is known to the principal when she is choosing the contract.

\subsection{Objectives}\label{s:objectives}

We will see in \Cref{sec:uniqueeq} that, given any contingent payments, there is a unique Nash equilibrium, which we call $\eqlbactions$. The principal optimizes over contracts, expecting this equilibrium to be played. In this section, we define two objectives for the principal; our results will apply to both objectives, except where we explicitly state otherwise.

Under the \emph{residual profit} objective, the principal maximizes the expected value of project success (normalized to $1$) minus payments to agents $$ P(Y(\eqlbactions))-\mathbb{E} \left[ \sum_{i \in N} t_i(S) \right].$$

It turns out to be optimal to give all agents a transfer of zero when the project fails (as we will argue formally in \Cref{sec:uniqueeq}), so without loss of optimality we can consider contracts as transfers $\sharesvector = (\sharesagt{1},\dots,\sharesagt{n})$ to each agent when the project succeeds. We will interpret these payoffs as equity shares in the project. Rewriting the expected residual profit of the principal, the residual profit maximization problem can be written as
\begin{equation} \text{choose $\sharesvector$ to maximize } 
 \residualprofits(\sharesvector) = \left(1-\sum_{i=1}^{n}\sharesagt{i}\right)P(\production(\mathbf{a}^*)). \tag{RP} \end{equation}

Under the \emph{success probability} objective, the principal maximizes the probability of success $P(Y(\eqlbactions))$ subject to the constraint that the total transfers are no larger than the output from the project. This constraint rules out positive transfers when the project fails. Thus, the success probability  maximization problem is \begin{equation} \text{choose $\sharesvector$ to maximize } 
 P(\production(\mathbf{a}^*)) \text{ subject to } \sum_i \sigma_i \leq 1. \tag{SP} \end{equation} The $\sigma_i$ can again be interpreted as equity shares.

An alternative model, which turns out to be very similar, is that the project yields a deterministic monetary output $P(Y)$, where $P: \mathbb{R}_{\geq 0} \rightarrow \mathbb{R}_{\geq 0}$  is a strictly increasing, concave, and twice differentiable function. Our analysis applies essentially unchanged to the study of this production function with either objective, provided that the principal uses \emph{linear} contracts, i.e., giving each agent a transfer equal to a fixed equity share $\sigma_i$ of output.\footnote{If $P$ is not bounded, this deterministic model requires an additional assumption that $\beta$ is small enough so that the principal's feasible payoffs are bounded.} In the rest of the paper, we will work with the binary-outcome model.

\subsection{Discussion of modeling assumptions}

There are several dimensions of our modeling assumptions that are worth commenting on. First, we assume that the project outcome---the random variable $S$---is the only observable consequence of any agent's effort. In other words, agents cannot be paid directly for their efforts $a_i$. The motivation behind this modeling assumption (discussed in \citet{holmstrom1982moral} and the ensuing literature) is that many aspects of agents' productive efforts are not observable, or not possible to make binding legal commitments over. 

One could enrich the model with more general signals about agents' efforts that a contract could condition on.
Our basic model highlights relationships between networks and equity pay that we expect would be relevant in such extensions, as long as equity pay was an important part of motivating (the marginal unit of) effort.

Second, we consider a binary-outcome model where the contracts available to the principal are essentially equity schemes, in which each agent's compensation is a share of the output of the firm (see \cite{tirole2012overcoming}, for example, for a similar modeling technique).
Our main motivation for being interested in equity payments is that this is an extremely popular form of incentive in certain types of organizations, along with closely related instruments such as options (see, e.g., \cite{levin2005profit}). Various models have been used to analyze the reasons for using equity pay as opposed to other contracts when they are available \citep{holmstrom1987aggregation, dai2022robust}.

\section{The intensive margin: Optimal equity shares \\and actions for active agents}
\label{sec:results}

This section characterizes the optimal allocation of equity among those who receive positive shares, as well as  the induced equilibrium. The first subsection describes the unique equilibrium of the network game given a fixed contract. The second subsection states our first main result, which describes the equity shares under the optimal contract and the corresponding equilibrium actions, and discusses various economic consequences.

\subsection{Equilibrium of the network game}\label{sec:uniqueeq}

We now show equilibrium is unique given any contract $(t_i)_{i\in N}$ and provide a characterization of equilibrium actions. Because agents' incentives depend only on the difference $t_i(1)-t_i(0)$ between transfers conditional on success and failure, we can shift payments and assume $t_i(0)=0$ without loss of generality for proving uniqueness. Similarly, this shift can only improve the principal's payoff, so it is without loss of optimality in the principal's problem. Thus, from now on, we will let contracts be described by equity shares $\sharesvector$. Fixing such a vector (which need not be optimal), agents' payoffs are
$$U_{i}(\actions,\network,\boldsymbol{\sigma}) = P\left(\production\right)\sharesagt{i}-\frac{\actionagt{i}^{2}}{2} .$$
Since agents receive shares of the team's output, their marginal returns to effort depend on others' actions. The first-order conditions for agents' best responses are
$$\actionagt{i} = P'(Y)\sharesagt{i}\left(\beta \sum_{j}G_{ij}\actionagt{j} + 1\right).$$
The following result states that these first-order conditions characterize the  unique Nash equilibrium.
\begin{proposition}\label{p:uniqueeq}
Fixing $\sharesvector$, there exists a unique Nash equilibrium. The equilibrium actions $\eqlbactions$ and  team performance $Y^*$ solve the equations
\begin{equation} [\mathbf{I}-P'(Y^*)\beta \sharesmatrix \network]\eqlbactions = P'(Y^*)\sharesvector \text{ and }Y^* = \production(\eqlbactions),\label{eq:Nash_conditions} \end{equation} where $\sharesmatrix = \operatorname{diag}(\sharesvector)$ is the diagonal matrix with entries $\Sigma_{ii} = \sharesagt{i}$. 
\end{proposition}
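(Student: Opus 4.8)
The plan is to prove two things: that a profile is a Nash equilibrium precisely when it solves the system \eqref{eq:Nash_conditions}, and that \eqref{eq:Nash_conditions} has a unique solution. The second part is obtained by a one-dimensional fixed-point argument in the scalar ``intensity'' $p:=P'(Y^*)$.

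\textbf{Step 1: best responses are the first-order conditions.} Fixing $\actionsnegagt i\ge\mathbf 0$, write $i$'s payoff as $f(a_i)=\sharesagt i P(Y)-\tfrac12 a_i^2$ with $Y=d\,a_i+\tfrac e2 a_i^2+Y_{-i}$, where $d=1+\beta\sum_{j\ne i}G_{ij}a_j\ge1$, $e=\beta G_{ii}\ge0$, and $Y_{-i}$ does not depend on $a_i$. Since $P$ is bounded, $f\to-\infty$, so a maximizer exists, and $f(a_i^{\max})\ge f(0)\ge0$ forces $a_i^{\max}\le\sqrt{2\sharesagt i}$. If $\sharesagt i=0$ the unique maximizer is $0$; if $\sharesagt i>0$ then $f'(0)=\sharesagt i P'(Y_{-i})\,d>0$, so every maximizer is interior and solves $a_i=\sharesagt i P'(Y)(d+e a_i)$. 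The one delicate point is that $f$ need not be concave when $G_{ii}>0$; I would resolve it by computing $f''(a_i)=\sharesagt i[P''(Y)(d+e a_i)^2+eP'(Y)]-1$ and substituting the first-order relation $\sharesagt i P'(Y)=a_i/(d+e a_i)$ at any critical point, which (using $P''\le0$) gives $f''(a_i)\le -d/(d+e a_i)<0$. Hence every critical point of the $C^2$ function $f$ on $[0,\infty)$ is a strict local maximum, so there is exactly one; it is the unique — hence continuous — best response, and it coincides with the first-order condition.

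\textbf{Step 2: reduction to \eqref{eq:Nash_conditions} and existence.} By Step 1, $\eqlbactions$ is a Nash equilibrium iff the first-order conditions hold simultaneously at $Y^*=Y(\eqlbactions)$; since $\network$ is symmetric, $\partial Y/\partial a_i=1+\beta(\network\actions)_i$, so stacking them is exactly $[\mathbf I-P'(Y^*)\beta\sharesmatrix\network]\eqlbactions=P'(Y^*)\sharesvector$. Existence of an equilibrium follows from Brouwer's theorem applied to the best-response map on the compact box $\prod_i[0,\sqrt{2\sharesagt i}]$, into which best responses map by the bound in Step 1.

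\textbf{Step 3: uniqueness (the main obstacle).} Row $i$ of \eqref{eq:Nash_conditions} for $i$ with $\sharesagt i=0$ gives $\eqlbaction i=0$, so the system decouples onto the active set $A=\{i:\sharesagt i>0\}$: with $D=\diag(\sharesagt i)_{i\in A}$ and $\network_A$ the principal submatrix, it reads $[\mathbf I-p\beta D\network_A]\eqlbactions_A=p\,\sharesvector_A$, where $p=P'(Y^*)>0$, $Y^*=Y(\eqlbactions)$, and necessarily $\eqlbactions_A>\mathbf 0$ (the first-order condition forces $a_i^*\ge p\sharesagt i$). The crucial observation is that at any solution $\rho(p\beta D\network_A)<1$: the matrix $\mathbf B:=p\beta D\network_A$ is entrywise nonnegative and $(\mathbf I-\mathbf B)\eqlbactions_A=p\,\sharesvector_A>\mathbf 0$ for some $\eqlbactions_A>\mathbf 0$, and a nonnegative matrix admitting such a vector has spectral radius below $1$ (pair with a nonnegative left Perron eigenvector of $\mathbf B$). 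Hence $p$ lies in $\mathcal R=\{p>0:p\beta\rho(D\network_A)<1\}$, on which $(\mathbf I-p\beta D\network_A)^{-1}=\sum_{k\ge0}(p\beta D\network_A)^k\ge0$ and the only profile consistent with that $p$ is $\eqlbactions_A(p):=p(\mathbf I-p\beta D\network_A)^{-1}\sharesvector_A>\mathbf 0$. Differentiating shows $\tfrac d{dp}\eqlbactions_A(p)>\mathbf 0$ entrywise, so $Y(\eqlbactions_A(p))$ is strictly increasing in $p$; since $P'$ is nonincreasing, $g(p):=p-P'(Y(\eqlbactions_A(p)))$ is strictly increasing on $\mathcal R$ and has at most one zero. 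Combined with Step 2, $g$ has a unique zero $p^*$, and the profile with $\eqlbactions_A=\eqlbactions_A(p^*)$ and zeros off $A$ (with $Y^*=Y(\eqlbactions)$) is the unique solution of \eqref{eq:Nash_conditions}, hence the unique Nash equilibrium. I expect this spectral bound to be the crux: because no smallness hypothesis on $\beta$ is imposed, $P'(Y^*)\beta\sharesmatrix\network$ could a priori be ``supercritical,'' and the M-matrix argument is precisely what shows the endogenous intensity $P'(Y^*)$ self-adjusts to keep the relevant matrix invertible — which, together with the non-concavity wrinkle handled in Step 1, is why the conclusion holds for every $\beta>0$.
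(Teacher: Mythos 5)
Your proof is correct, and its engine is the same as the paper's: both arguments collapse the equilibrium system to a one-dimensional monotone fixed-point problem in a scalar intensity. The paper parametrizes by the performance level $y$, defines $\eqlbactions(y)=[\mathbf{I}-P'(y)\beta\sharesmatrix\network]^{-1}P'(y)\sharesvector$, and shows $y\mapsto Y(\eqlbactions(y))$ is decreasing; you parametrize by $p=P'(y)$ and show $p\mapsto p-P'(Y(\eqlbactions_A(p)))$ is increasing. These exploit the identical Neumann-series monotonicity through a change of variables. Three supporting steps do differ. First, for existence you apply Brouwer to the best-response map on the box $\prod_i[0,\sqrt{2\sharesagt{i}}]$, while the paper runs an intermediate value theorem on the scalar equation, which requires a limiting argument showing $Y(\eqlbactions(y))\to\infty$ as $y$ approaches the critical level $y_0$; your route sidesteps that limit but needs continuity of best responses (which your uniqueness-of-maximizer step delivers via the maximum theorem). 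Second, your Perron/M-matrix observation---that any solution with $\eqlbactions_A\gg\mathbf{0}$ and $(\mathbf{I}-\mathbf{B})\eqlbactions_A\gg\mathbf{0}$ forces $\rho(\mathbf{B})<1$---gives a clean a priori reason why only the subcritical regime is relevant, where the paper instead restricts the domain of its scalar variable by fiat to $\{y: P'(y)\beta\rho(\sharesmatrix\network)<1\}$. Third, you handle the possible failure of concavity of $U_i$ in $a_i$ when $G_{ii}>0$ by showing every critical point is a strict local maximum; the paper's assertion that $Y$ is linear in $a_i$ presumes no self-links, so your Step 1 is the more careful treatment (immaterial if self-links are excluded, as they are in the paper's examples). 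None of these differences changes the substance: the argument is sound and reaches the same conclusion by essentially the paper's strategy.
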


Note that the result entails a positive equilibrium action for those agents with $\sigma_i>0$, and an action of zero otherwise. 

For intuition, suppose that $P(Y)= \alpha Y$ for all relevant action vectors, where $\alpha>0$ is a constant. Then equilibrium actions are given by
\begin{equation} \eqlbactions = [\mathbf{I}-\alpha \beta \sharesmatrix \network]^{-1} \alpha \sharesvector.  \label{eq:Nash_conditions_linear} \end{equation}
 When all agents receive equal shares, equilibrium actions are proportional to Bonacich centralities (as in \citet{BCZ-06}). For arbitrary shares, the actions are a modified version of Bonacich centrality with respect to a network $\sharesmatrix \network$.

The network $\sharesmatrix \network$ reflects spillovers; its $(i,j)$ entry is the slope of $i$'s best-response in $j$'s action. One can see from the form of this matrix that it is endogenous to the principal's choice of  equity compensation, as discussed in the introduction: when an agent gets a larger share of the group output, the agent cares more about the joint output with every collaborator, making the agent more strategically sensitive to those collaborators' efforts.

\subsection{Balance conditions in the optimal contract}

Our first main result characterizes the optimal allocation and equilibrium actions among the set of agents receiving positive shares.

We have noted that an agent exerts positive effort under a given contract if and only if he receives positive equity. We will thus call an agent \emph{active} under a given equity allocation $\bm{\sigma}$ if he receives a positive equity share $\sigma_i>0$ and \emph{inactive} otherwise.

\begin{theorem} Suppose $\sharesvector^*$ is an optimal allocation and $\eqlbactions$ and $Y^*$ are the induced equilibrium actions and team performance, respectively. The following properties are satisfied:
    \label{t:optsharescharacterization}
    \begin{enumerate}
        \item \textbf{Balanced neighborhood equity}: There is a constant $c>0$ such that for all active agents $i$, we have $(\network \sharesvector^*)_i = c$.
    \item \textbf{Actions are proportional to shares}: $\eqlbactions  = \mu \sharesvector^*,$ where $$ \mu  =  \frac{P'(Y^*)}{1-P'(Y^*)\beta c}.$$
    \item \textbf{Balanced neighborhood actions}: For all active agents, $(\network \eqlbactions)_i = \mu c$.
    \end{enumerate}
\end{theorem}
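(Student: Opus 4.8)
The plan is to remove the equilibrium map by reparametrizing the principal's problem in terms of the induced action profile, and then to squeeze the first-order conditions. First I would observe that \Cref{p:uniqueeq} yields a bijection between nonnegative profiles $\actions$ and equity allocations $\sharesvector$ (preserving active sets): given $\actions\ge 0$, put $\production:=\production(\actions)$, $q_i:=1+\beta(\network\actions)_i>0$, and $\sigma_i:=a_i/(P'(\production)q_i)$; since this is exactly the agents' first-order condition solved for $\sigma_i$, $\actions$ is the unique equilibrium induced by that $\sharesvector$, and every $\sharesvector$ arises this way. Hence both objectives become smooth functions of $\actions\ge0$ alone: for (RP), $\residualprofits=\bigl(1-\tfrac{1}{P'(\production)}\sum_i \tfrac{a_i}{q_i}\bigr)P(\production)$ with $\production=\production(\actions)$, and for (SP) one maximizes $P(\production(\actions))$ subject to $\tfrac{1}{P'(\production)}\sum_i \tfrac{a_i}{q_i}\le 1$. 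At any optimum $\eqlbactions$, using $\partial \production/\partial a_k=q_k$, the stationarity condition in $a_k$ for each active agent $k$ (those with $\eqlbaction{k}>0$) simplifies — a direct but slightly lengthy computation, the key substitution being $\eqlbaction{j}/q_j=P'(Y^*)\optsharesagt{j}$ — to
\[
\frac{1}{q_k}\;-\;\theta\,q_k \;=\; \beta P'(Y^*)\sum_{j}\networkelement{k}{j}\,\frac{\optsharesagt{j}}{q_j}
\qquad\text{for every active }k,
\]
where $\theta$ is a scalar not depending on $k$ (its value involves $P,P',P''$ at $Y^*$, and the budget multiplier in the (SP) case, but will be irrelevant). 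The same relation can be obtained from a Lagrangian in $(\actions,\production,\sharesvector)$ with the agents' first-order conditions and the definition of $\production$ as equality constraints: the $\sigma_k$-stationarity condition reads $\lambda_k q_k=\mathrm{const}$, and combining it with the $a_k$- and $\production$-stationarity conditions gives the display.

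Next I would record the second relation, which holds automatically: the agents' first-order conditions (\Cref{p:uniqueeq}), using $\eqlbaction{j}=P'(Y^*)\optsharesagt{j}q_j$, say that
\[
q_k-1 \;=\; \beta P'(Y^*)\sum_j \networkelement{k}{j}\,\optsharesagt{j}\,q_j
\qquad\text{for every active }k.
\]
The crux is to deduce from these two displays that $q_k=1+\beta(\network\eqlbactions)_k$ is the same for all active agents; I expect this to be the main obstacle, as everything else is bookkeeping. I would argue via extremal agents. Let $m$ and $M$ be active agents with, respectively, the smallest and largest $q_k$. Since $\network\ge0$ and $q_j\ge q_m$ for all active $j$, the second display at $m$ gives $\sum_j\networkelement{m}{j}\optsharesagt{j}\le (q_m-1)/(\beta P'(Y^*)q_m)$; plugging this into the first display at $m$ (again using $q_j\ge q_m$) yields $\tfrac1{q_m}-\theta q_m\le \tfrac1{q_m}-\tfrac1{q_m^2}$, i.e. $\theta\ge 1/q_m^3$. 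Running the identical chain at $M$ with every inequality reversed gives $\theta\le 1/q_M^3$. Since $q_M\ge q_m\ge1$, we get $1/q_m^3\le\theta\le 1/q_M^3\le 1/q_m^3$, so $q_M=q_m$ and $q_k\equiv q$ on the active set. Note the argument is insensitive to the value of $\theta$, so it covers (RP) and (SP) uniformly.

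Finally I would read off the three properties. From $\eqlbaction{k}=P'(Y^*)\optsharesagt{k}q_k=P'(Y^*)\,q\,\optsharesagt{k}$ on active coordinates, together with $\eqlbaction{k}=0=\optsharesagt{k}$ on inactive ones, we get property (2): $\eqlbactions=\mu\sharesvector^*$ with $\mu=P'(Y^*)q>0$. Then $q=1+\beta(\network\eqlbactions)_k=1+\beta\mu(\network\sharesvector^*)_k$, so $(\network\sharesvector^*)_k=(q-1)/(\beta\mu)=:c$ is the same for all active $k$, which is property (1); combining $q=1+\beta\mu c$ with $\mu=P'(Y^*)q$ gives $\mu=P'(Y^*)/(1-P'(Y^*)\beta c)$, the denominator being positive exactly because $\mu>0$. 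Property (3) is then immediate: $(\network\eqlbactions)_k=\mu(\network\sharesvector^*)_k=\mu c$ for active $k$. For $c>0$ one checks separately that an optimal active set must contain a link — otherwise the principal could strictly improve by redirecting equity toward a complementary pair — so $q>1$ and hence $c>0$.
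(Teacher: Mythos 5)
Your proof is correct, and it takes a genuinely different route from the paper's. The paper first reduces to a dual problem (minimize $\sum_i\sigma_i$ subject to attaining the team performance $y^*$) precisely to freeze $P'(Y)$ at a constant, then computes $dY/d\sigma_i=b_i\overline{b}_i/P'(Y)^2$ in terms of endogenous Bonacich centralities $b_i=\sum_j M_{ji}$ of the matrix $\bonacichendo=P'(y^*)[\mathbf{I}-\beta P'(y^*)\sharesmatrix\network]^{-1}$, and finally shows via a min/max argument that $b_i\overline{b}_i$ constant forces all $b_i$ equal, which yields balanced equity (a) first and then (b), (c). You instead reparametrize the principal's problem by the action profile (a legitimate bijection, by \autoref{p:uniqueeq}), which absorbs the curvature of $P$ into a single scalar $\theta$ rather than requiring the dual reduction, and you avoid $\bonacichendo$ entirely: your two displays pair the principal's stationarity in $a_k$ with the agents' first-order conditions, and your squeeze $1/q_m^3\le\theta\le 1/q_M^3$ on the extremal values of $q_k=1+\beta(\network\eqlbactions)_k$ forces $q_k$ constant, i.e., you obtain balanced neighborhood actions (c) first and then read off (b) and (a). I verified your principal first-order condition: with $\partial\production/\partial a_k=q_k$ and $\partial(\sum_i\sigma_i)/\partial a_k=-\tfrac{P''}{(P')^2}q_k\sum_i\tfrac{a_i}{q_i}+\tfrac{1}{P'q_k}-\beta\sum_iG_{ki}\tfrac{\sigma_i}{q_i}$, both (SP) and (RP) do reduce to your display with a $k$-independent $\theta$, and your extremal argument is indeed insensitive to $\theta$'s value and sign. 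Both proofs ultimately rest on a structurally similar extremal trick (the paper credits its version to Ostrovsky), but applied to different objects; your version is arguably more self-contained since it needs neither the dual problem nor the Neumann-series identity $\sharesmatrix^{-1}\eqlbactions=\mathbf{b}$. The only points where you are slightly informal---the binding of the budget constraint (so $\theta$ is well defined), constraint qualification for the KKT conditions, and the final check that an optimal active set contains a link so that $c>0$---are at the same level of rigor as the paper's own treatment and are easily filled in.
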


Omitted proofs are in \autoref{a:appendix_proofs}. The conditions in the theorem state, at a high level, that it is optimal for the principal to equalize the complementarities motivating various active agents to work. We will give an intuition for why this is a necessary feature of an optimal contract below, but first we spell out the content of the result and some simple relationships among its parts.

The property of balanced neighborhood equity says that for each active agent $i$, the sum $\sum_j G_{ij} \sigma_j$ of shares given to neighbors of $i$, weighted by the strength of $i$'s connections to those neighbors in $\network$, is equal to the same number (i.e., is not dependent on $i$). %

Part (b) says that under the optimal allocation, each active agent's equilibrium effort is a constant multiple of the agent's equity share. This turns out to follow from part (a), as we now explain. For a given optimal share vector $\optsharesvector$, let us
write $\optsharesmatrix=\diag(\optsharesvector)$; then (a) is equivalent to saying that $\optsharesvector$ is a right-hand eigenvector of $\optsharesmatrix \network$ with eigenvalue $c$. Now consider the equation we get when we solve (\ref{eq:Nash_conditions}) for $\mathbf{a}^*$: $$\eqlbactions = \underbrace{P'(Y^*)\left[\mathbf{I}-P'(Y^*)\beta \optsharesmatrix \network \right]^{-1}}_{\mathbf{M}} \optsharesvector.$$ Our observation about $\optsharesvector$ implies that it is also an eigenvector of the matrix $\mathbf{M}$ with eigenvalue $\mu$; that establishes (b).\footnote{An alternative argument is to expand the expression for $\eqlbactions$ using the Neumann series to write $\eqlbactions=P'(Y^*)\sum_{k=0}^{\infty}P'(Y^*)^k\beta^{k}\left(\optsharesmatrix \network\right)^{k} \optsharesvector$; then repeatedly use balanced neighborhood equity to rewrite this as $P'(Y^*)\sum_{k=0}^{\infty}P'(Y^*)^k\beta^{k}c^{k}\optsharesvector$.}

The property of balanced neighborhood actions states that for each active agent $i$, the sum of actions of neighbors of $i$, weighted by the strength of $i$'s connections to those neighbors in $\network$, is equal to the same number, $\mu c$. This follows immediately from (a) and (b).

The system of equations in part (a) of \autoref{t:optsharescharacterization} can be solved explicitly for the optimal shares $\optsharesvector$ as long as the relevant adjacency matrix is invertible, which holds for generic weighted networks. Motivated by this, we define equity centrality:
\begin{definition}
Given a weighted network  with non-singular adjacency matrix $\mathbf{W}$, the \textbf{equity centrality} of agent $i$ is $\left(\mathbf{W}^{-1}\boldsymbol{1}\right)_i$.
\end{definition}
\autoref{t:optsharescharacterization} then entails that under an optimal allocation, for each active agent $i$, the  equity share $\sigma_i^*$  is proportional to $i$'s equity centrality in the subnetwork $\subnetwork$ of active agents for that allocation; the same is true for actions, with a different constant of proportionality.

Equity centrality behaves quite differently from standard measures such as Bonacich centrality. In particular, the inverse $\mathbf{W}^{-1}$ changes non-monotonically as $\mathbf{W}$ changes. We will see in \Cref{sec:compstatics} that  this can induce non-monotonicities in the optimal allocation and the resulting actions and utilities.

An implication of our characterization of the optimal equity is that the ratio of shares allocated to two active agents is independent of the complementarity parameter $\beta$, and depends only on the network $\network$ and the set of active agents. (Our results in \Cref{sec:active} will imply that the optimal active sets are also independent of $\beta$.) Since the induced actions are proportional to shares, the ratio between the equilibrium actions of any two active agents is independent of $\beta$ as well. This property is surprising, because in standard network games analyses \citep{BCZ-06}, equilibrium actions are highly sensitive to $\beta$. This dependence is endogenously exactly canceled out by the planner's optimal equity allocation.

\begin{remark}
    The result of \autoref{t:optsharescharacterization} holds under either the residual profit or success probability objective, as will all our characterizations of optimal contracts. In fact, the proof of \autoref{t:optsharescharacterization} establishes a stronger statement: if $\optsharesvector$ solves the problem of maximizing $Y(\mathbf{a}^*)$ subject to the constraint  $\sum_i \sigma_i=s$, where $s$ is any positive number, then $\optsharesvector$ and $\mathbf{a}^*$ must satisfy the conditions given in the theorem.
\end{remark}

\subsubsection{Intuition for \autoref{t:optsharescharacterization}} We can provide some intuition for the balanced neighborhood equity and action conditions by informally arguing they are \emph{sufficient} for a certain principal first-order condition that must hold at the optimal contract. First, we can redescribe the principal's problem as choosing \emph{actions} from among those that can be implemented at some optimal contract. 
Suppose we perturb agent $i$'s action exogenously by $\epsilon$ and follow the consequences through the system of best responses. Each agent $j$'s best response is
$$P'(Y^*)\sigma_j \left(1+\beta \sum_{j'\in N} G_{jj'} a_{j'}\right),$$
 so the direct impact on $j$'s action, given by the $j'=i$ term, is to increase it by $\beta P'(Y^*)\sigma_j  G_{ji} \epsilon$, which (by symmetry of $\network$), is equal to $\beta P'(Y^*)\sigma_j  G_{ij} \epsilon$. The balanced neighborhood equity condition implies that the sum of these direct impacts does not depend on $i$. That is, the direct impact on the aggregate effort level does not depend on which agent's action we perturbed. Iterating this argument, the indirect impact of increasing $i$'s actions on the total of actions does not depend on $i$'s identity either. So the balanced neighborhood equity condition implies that increasing any agent's action marginally has the same effect on the total of all actions.

Two gaps remain between this indifference and the principal's first-order condition, which requires that redistributing shares among active agents locally does not affect output. First, some agents might increase their actions more than others when given $\epsilon$ additional equity. (So even if a principal is indifferent between any same-sized perturbation to actions, she may be able to achieve some of these more cheaply than others.) This problem does not arise precisely in case actions are proportional to shares, which is implied by the two conditions in the theorem. Second, output is not only the sum of shares; the quadratic term in output could change the principal's first-order condition. But it turns out the balanced neighborhood action condition also implies the first-order conditions for output is actually the same as for the sum of efforts.

We have argued that the two conditions in \autoref{t:optsharescharacterization} imply the principal's first-order condition is satisfied. It is not obvious that these two conditions---balanced neighborhood equity and balanced neighborhood actions---can be satisfied simultaneously; indeed, this depends on some of the specific structure of our model. The full proof establishes that these conditions can be jointly satisfied and that they are also \emph{necessary} conditions for an allocation to be optimal.

\section{The extensive margin: Active and inactive agents}
\label{sec:active}

We next ask which agents are active and which agents are inactive under optimal contracts for a given complementarity network. Recall an agent is defined to be active under a given allocation if he receives positive equity.

The main results in this section show that the active sets under optimal allocations are highly connected subnetworks. We first show that any optimal active set has diameter at most two in the complementarity network $\network$. We then show that in the special case of unweighted networks, there is an optimal allocation with a clique as the active set. We interpret these results as saying that, when incentives to exert effort are based only on global outcomes (and not local measures of performance), smaller and more highly connected teams outperform larger and more dispersed teams.

It is not immediate from \autoref{t:optsharescharacterization} which agents are active; indeed, there can be several candidate active sets compatible with the condition of the theorem. (\autoref{t:optsharescharacterization} implies that the candidate active sets are the subnetworks $\subnetwork \subseteq \network$ such that the row sums of $\subnetwork^{-1}$ are positive.)

The key to our analysis is the fact (formalized as \autoref{l:activesetopt} in \autoref{a:appendix_proofs}) that we can find the active sets under optimal allocations by solving the following optimization problem\footnote{This is a quadratically constrained optimization of a linear objective, since the  constraint can be rewritten as $\sigma_i(\mathbf{G}\boldsymbol{\sigma})_i=0$.} among non-negative equity shares summing to a fixed value:
\begin{equation}
  \begin{aligned}
    & \max_{\sharesvector} && c \\
    & \text{subject to} && (\network \boldsymbol{\sigma})_i = c \text{ whenever }\sigma_i > 0. \\
  \end{aligned} \label{eq:QP}
\end{equation}

The reason for this reduction is that, when the balanced equity condition holds, we can rewrite output as the following function of the total shares allocated to agents and $c$, the total weighted equity in each active agent's neighborhood: \begin{equation}
    \production(\eqlbactions) = \left(\sum_{i=1}^{n}\sharesagt{i}\right)\left(\frac{P'(Y^{*})}{1-\beta P'(Y^{*})c} + \frac{\beta P'(Y^{*})^{2} c}{2(1-\beta P'(Y^{*}) c)^{2}}\right).
    \label{eq:eqlbtotalproduction}
\end{equation}
The right-hand side is an increasing function of $c$. So the principal can choose an optimal active set by maximizing the constant $c$ in the balanced equity condition. 

One important implication of this is the following invariance of the principal's optimization to the complementarity parameter:
\begin{proposition}\label{p:invariance}If $\bm{\sigma}_0^*$ is a solution to the principal's problem under $\beta_0>0$ and $\beta_1$ is another complementarity parameter, then there is a constant $k>0$ so that $k\bm{\sigma}_0^*$ solves the principal's problem under $\beta_1$.  \end{proposition}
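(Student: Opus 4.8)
The plan is to use the reduction to the optimization problem \eqref{eq:QP}, which by \autoref{l:activesetopt} identifies the active sets under optimal allocations, together with the explicit output formula \eqref{eq:eqlbtotalproduction}. The key observation is that the feasible set of \eqref{eq:QP} --- the set of nonnegative $\bm{\sigma}$ with $\sum_i \sigma_i = s$ and $(\network\bm{\sigma})_i$ constant on the support of $\bm{\sigma}$ --- does not mention $\beta$ at all; only the translation from the optimal $c$ back to output via \eqref{eq:eqlbtotalproduction} involves $\beta$ (and $P'$). So the set of candidate active sets, and the associated profile of equity-centrality directions, is the same for every $\beta$. What changes with $\beta$ is only the scale $s$ of total equity the principal wants to commit and the induced $Y^*$, but the \emph{shape} of the allocation is pinned down by the network alone.

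The steps I would carry out are as follows. First, fix the optimal allocation $\bm{\sigma}_0^*$ under $\beta_0$, let $A$ be its active set and $s_0 = \sum_i (\sigma_0^*)_i$, and record the constant $c_0$ with $(\network\bm{\sigma}_0^*)_i = c_0$ for $i\in A$; by \autoref{t:optsharescharacterization} and the definition of equity centrality, $\bm{\sigma}_0^*$ restricted to $A$ is a positive multiple of the equity-centrality vector of the subnetwork $\subnetwork$ on $A$. Second, argue that $A$ solves \eqref{eq:QP} (for the appropriate normalization): by homogeneity of the constraint in \eqref{eq:QP}, the \emph{ranking} of active sets by the achievable value of $c$ (normalized per unit of total equity) is independent of the total-equity constraint, hence independent of $\beta$. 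Concretely, among admissible active sets $B$, the per-unit value $c_B/s$ equals $\big(\sum_i (\subnetwork_B^{-1}\mathbf{1})_i\big)^{-1}$ times a common factor, so the maximizer $B = A$ is the same for all $\beta$. Third, for $\beta_1$, take $k\bm{\sigma}_0^*$ with $k>0$ chosen so that the scalar first-order condition in the total-equity dimension --- i.e. the condition determining the optimal $\sum_i\sigma_i$ given the optimal shape and the formula $Y^*=Y(\eqlbactions)$ with $P'$ evaluated at $Y^*$ --- is satisfied at $\beta_1$. One checks $k\bm{\sigma}_0^*$ is still nonnegative, still supported on $A$, still satisfies balanced neighborhood equity with constant $kc_0$, and, plugging into \eqref{eq:Nash_conditions}, induces an equilibrium; since $A$ is an optimal active set for $\beta_1$ by the second step and $k$ is chosen to optimize the remaining one-dimensional problem, $k\bm{\sigma}_0^*$ is optimal under $\beta_1$.

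The main obstacle is making the existence and positivity of the scaling constant $k$ rigorous: this requires showing that, holding the shape fixed, the principal's objective as a function of the scalar total equity $t = \sum_i\sigma_i$ has an interior maximizer for each $\beta$, and that the maximizing $t$ scales correctly (so that $k = t_1/s_0$ works and is positive). This is a one-dimensional concavity/first-order-condition argument using \eqref{eq:eqlbtotalproduction} and the definition of $Y^*$, and one must be careful that the map $t \mapsto Y^*$ is well-defined and that the relevant derivative conditions coming from $P$ being strictly increasing and concave pin down a unique positive $t_1$; degenerate cases (e.g. where the optimum pushes total equity to a boundary) need to be handled or ruled out. A secondary, more routine point is to confirm that the reduction via \autoref{l:activesetopt} legitimately lets us optimize active set and scale separately, which follows because \eqref{eq:eqlbtotalproduction} is increasing in $c$ and the constraint set factorizes into ``choice of active set'' and ``choice of scale.''
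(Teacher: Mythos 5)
Your proposal is correct and follows essentially the same route as the paper: reduce to the $\beta$-independent optimization in \autoref{l:activesetopt} to conclude the optimal active set is invariant, then use the equity-centrality characterization from \autoref{t:optsharescharacterization} to see that only the scale of the allocation can change. The paper's final step is slightly more direct than yours—rather than constructing $k$ from a one-dimensional first-order condition, it simply applies \autoref{t:optsharescharacterization} to the optimum under $\beta_1$ to get $\optsharesvector_1 = c_1\,\subnetwork^{-1}\mathbf{1}$ and sets $k = c_1/c_0$, which sidesteps the existence-of-an-interior-scale-maximizer issue you flag as the main obstacle.
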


This says essentially that the active set and the ratios in which equity is optimally allocated are both independent of $\beta$. This follows from observing that the optimization problem to solve for the active set does not depend on $\beta$. For the success probability objective, the constant $k$ is equal to $1$. Under the residual profits objective, the principal may adjust the total share of output distributed to agents as $\beta$ changes (see \Cref{s:beta_comp_stat}).

Before turning to other general implications of the lemma, we study the example of a three-agent network. This example, which is the smallest interesting case of our model, shows that the active set can depend in non-trivial ways on network structure. We describe the optimal allocation here and provide details in \autoref{a:three_agent}.

\begin{example}
Consider a weighted network with three agents without  self-links (see \autoref{l:threeagentcomplete}). Since it is optimal to have both agents active in networks with two agents, three-agent networks are the smallest non-trivial example of our model.

Without loss of generality, we can assume $G_{12} \geq G_{13} \geq G_{23}$ and choose the normalization $G_{12}=1$, so that the adjacency matrix is 
$$\network = 
\begin{bmatrix}0 & 1 & G_{13} \\
1 & 0 & G_{23} \\
G_{13} & G_{23} & 0
\end{bmatrix}.$$

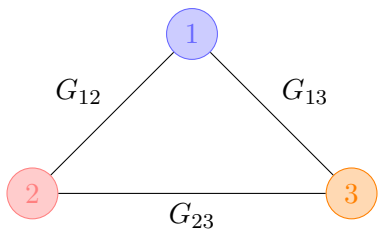
\begin{figure}
    \centering
        \begin{tikzpicture} [node distance={30mm}, agt1/.style = {draw,circle,color=blue!60,fill=blue!20}, agt2/.style = {draw,circle,color=red!50,fill=red!20}, agt3/.style = {draw,circle,color=orange!100,fill=orange!30}]
\node[agt1] (1) {$1$};
\node[agt2] (2) [below left of=1] {$2$};
\node[agt3] (3) [below right of=1] {$3$};
\draw (2) edge["$G_{12}$"] (1);
\draw (3) edge["$G_{23}$"] (2);
\draw (1) edge["$G_{13}$"] (3);
\end{tikzpicture}
    \caption{Three agent weighted graph with weights $G_{12}, G_{13},$ and $G_{23}$.}
    \label{l:threeagentcomplete}
\end{figure}

The optimal active set consists of either agents $1$ and $2$ or all three agents.  If all agents are active, then \autoref{t:optsharescharacterization} implies that the optimal shares must solve
$$\optsharesagt{1} = \frac{1+G_{13} - G_{23}}{2G_{13}} c, \quad \optsharesagt{2} = \frac{1+G_{23} - G_{13}}{2G_{23}} c, \quad \optsharesagt{3} = \frac{G_{23} + G_{13} - 1}{2G_{23}G_{13}} c,$$
for some constant $c$. Since we must have $\sigma^*_3 > 0$ for all agents to be active, a necessary condition for $\{1,2,3\}$ to be an optimal active set  is $G_{13} + G_{23} > 1$.

This necessary condition also turns out to be sufficient; we now sketch the argument for this. A calculation shows that distributing a total amount $s$ of equity consistent with these ratios gives a value of $$c = \frac{2G_{13}G_{23}s}{2(G_{23}+G_{13}) - 1 -(G_{13}-G_{23})^{2}}.$$
On the other hand, if a total amount $s$ of equity is distributed among two active agents, then $c=s/2$. A bit more algebra confirms that $$\frac{2G_{13}G_{23}s}{2(G_{23}+G_{13}) - 1 -(G_{13}-G_{23})^{2}} > \frac{s}{2}$$
whenever $G_{13} + G_{23} > 1$.

So the active set includes all three agents if and only if $G_{13} + G_{23} > 1$. In this example, the principal maximizes $(\mathbf{G}\sharesvector)_i$ subject to the balanced equity condition by choosing an active set that maximizes the minimum weighted degree of an agent in the induced subnetwork. If the least connected agent's complementarities are too weak, the principal prefers to exclude that agent from the team and concentrate on incentivizing the two agents with stronger complementarities.\qedhere
\end{example}

Moving back to general networks, we now state two results showing that the principal prefers a highly connected active set. The first result on this holds for any network satisfying our maintained assumptions.  It states that the network distance between any pair of agents in the active set is small. Recall that the diameter of a network is the longest distance\footnote{Shortest path consisting of links with positive weights.} between any two agents in the network.
\begin{proposition}\label{p:diameter}
The diameter of the active set under any optimal allocation is at most $2$.
\end{proposition}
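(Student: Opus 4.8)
The plan is to work with the reduced problem \eqref{eq:QP}: an optimal active set is one that attains the maximum value of $c$ subject to the balanced equity condition with a fixed total of shares. Suppose, for contradiction, that some optimal active set $A$ contains two agents $i$ and $j$ at distance at least $3$ in $\network$; in particular $i$ and $j$ are not neighbors, and no agent is simultaneously a neighbor of both. The idea is to perturb the optimal allocation $\optsharesvector$ (supported on $A$) by shifting a small mass of equity between $i$ and $j$ and show this strictly increases $c$, contradicting optimality — or alternatively to exhibit an allocation on a strict subset of $A$ achieving a strictly larger $c$.

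Concretely, here is the computation I would carry out. Write $c = (\network\optsharesvector)_k$ for every active $k$. Consider moving mass $\delta$ from $j$ to $i$: let $\sigma_i' = \optsharesagt{i} + \delta$, $\sigma_j' = \optsharesagt{j} - \delta$, all other shares fixed. Because $i$ and $j$ have no common neighbor and are not adjacent, changing $\sigma_i$ affects $(\network\sigma)_k$ only for $k \in \nghbd{i}$, and changing $\sigma_j$ affects it only for $k \in \nghbd{j}$, and these two neighborhoods are disjoint and also exclude $i$ and $j$ themselves (no self-loops here, or more carefully: $G_{ii}=G_{jj}=0$ is not assumed in general, but $i\notin\nghbd j$ and $j \notin \nghbd i$ since they're non-adjacent). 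So after the perturbation, $(\network\sigma')_k = c + \delta G_{ki}$ for $k\in\nghbd i$ and $(\network\sigma')_k = c - \delta G_{kj}$ for $k \in \nghbd j$, while the constraint values at all other active agents (including possibly $i,j$ themselves, if they are not their own neighbors) are unchanged at $c$. The total of shares is preserved. The perturbed allocation no longer satisfies balanced equity exactly, but one then re-balances: the key point is that we have ``slack'' to raise the minimum. The cleanest route is to argue at the level of \eqref{eq:QP} directly — since $c$ enters only through the constraint ``$(\network\sigma)_k = c$ for active $k$'', and the active set $A$ with two far-apart agents decomposes the constraint into two (or more) independent blocks, one can treat the blocks separately, scale equity toward whichever block supports a larger $c$ per unit of total share, and strictly improve. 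I expect the decomposition to force $A$ to be disconnected-in-the-relevant-sense unless the diameter is at most $2$: if $i,j\in A$ are at distance $\geq 3$, then no active agent's neighborhood (hence no constraint equation) couples the $\sigma$-coordinates near $i$ with those near $j$, so the problem splits, and a split problem is strictly improved by concentrating on the better half.

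The main obstacle I anticipate is handling the re-balancing carefully: after a perturbation the constraint set of \eqref{eq:QP} is defined by which $\sigma_i$ are positive, and one must ensure the perturbed allocation is still feasible (nonnegative, correct support) and that the relevant $c$-value genuinely rises — monotonicity of the right-hand side of \eqref{eq:eqlbtotalproduction} in $c$ then finishes it via \autoref{l:activesetopt}. A secondary subtlety is the treatment of self-loops $G_{kk}$; the three-agent example assumes none, but the general proposition does not, so one should phrase the neighborhood-disjointness argument so it does not rely on their absence (distance $\geq 3$ already precludes $i\in\nghbd j$ and any common neighbor, which is all that is used). I would also double-check the edge case where one of the ``halves'' of $A$ is a single isolated-within-$A$ vertex, where the balanced equity condition degenerates, and confirm it still yields the desired strict inequality.
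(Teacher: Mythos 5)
Your starting point---reducing to the quadratic program \eqref{eq:QP} via \autoref{l:activesetopt} and exploiting the disjointness of $N(i)$ and $N(j)$ for active agents $i,j$ at distance at least $3$---matches the paper's. But the argument you build on top of it has a genuine gap. Your key structural claim, that an active set containing two agents at distance at least $3$ makes the constraint system of \eqref{eq:QP} ``split into independent blocks,'' is false as stated. Take $i$ and $j$ at distance exactly $3$ along a path $i,a,b,j$ with $a,b$ active: the balance constraint at $a$ is $G_{ai}\sigma_i + G_{ab}\sigma_b = c$, which couples $\sigma_i$ (on the $i$-side) with $\sigma_b$ (a neighbor of $j$). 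Intermediate active agents tie the two ``halves'' together, so there is in general no decomposition and no ``better half'' to concentrate on. Your fallback, the mass-shifting perturbation from $j$ to $i$, is also not carried through: after the shift the constraint values at agents in $N(i)$ rise while those at agents in $N(j)$ fall, you acknowledge the allocation must then be ``re-balanced,'' and you give no argument that the re-balanced value of $c$ is larger. Since the perturbation pushes different constraints in opposite directions, the sign of the net effect is precisely the issue, and it is left unresolved.

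The paper's proof avoids all of this with a single global comparison rather than a local one. Let $\overline{g}=\max_{k,l}G_{kl}$; putting share $1/2$ on each endpoint of a heaviest link is feasible and achieves $c=\overline{g}/2$. If instead an optimal active set $A^*$ contains $i,j$ at distance at least $3$, then $\{i\}$, $\{j\}$, $N(i)\cap A^*$, and $N(j)\cap A^*$ are pairwise disjoint, so
$$2c=(\network\boldsymbol{\sigma}^*)_i+(\network\boldsymbol{\sigma}^*)_j\le\overline{g}\sum_{k\in N(i)\cup N(j)}\sigma_k^*<\overline{g},$$
where the strict inequality uses $\sigma_i^*>0$ together with $i\notin N(i)\cup N(j)$. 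This contradicts optimality, since $c=\overline{g}/2$ is attainable. That comparison with the two-agent benchmark---not any perturbation or decomposition---is the missing idea; your side concerns about self-loops and degenerate isolated vertices become moot once this route is taken.
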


The idea is that if two agents $i$ and $j$ are at distance larger than two from each other, their neighborhoods are disjoint. An optimal allocation must divide shares between the disjoint sets $\{i\}$, $\{j\}$, and $N(i)$, and $N(j)$ (as well as any other potential agents in the active set). It must also satisfy the balanced neighborhood equity condition, which implies that the shares allocated $N(i)$ and $N(j)$ cannot be very large. Indeed, the proof shows that any such allocation is dominated by allocating shares to only two agents: splitting shares evenly between two agents connected by a link with the largest weight in the network gives a higher value of the constant $c$.

In the special case of unweighted networks, the message that highly connected active sets are optimal can be sharpened: It is an optimal solution for the principal to choose any clique\footnote{A subnetwork with links between all pairs of agents.} of maximum size and divide shares equally among the agents in this clique.
\begin{theorem}
\label{t:unweighted_active_set}
If $\network$ is an unweighted network, then any maximum clique is the active set at an optimal allocation.
\end{theorem}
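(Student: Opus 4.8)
The plan is to use the reduction in \eqref{eq:QP}: it suffices to show that splitting a fixed total amount of equity equally among the vertices of a maximum clique achieves the largest possible value of $c$ over all feasible allocations. Normalize the total equity to $\sum_i \sigma_i = 1$. First I would compute the objective for the proposed solution: if $K$ is a clique of size $\omega$ (the clique number of $\network$) and we set $\sigma_i = 1/\omega$ for $i \in K$ and $0$ otherwise, then for every active $i$ we get $(\network \bm{\sigma})_i = \sum_{j \in K, j \neq i} 1/\omega = (\omega-1)/\omega = 1 - 1/\omega$. This also verifies feasibility: the constraint $(\network\bm\sigma)_i = c$ is only required at active $i$, and there it holds with $c = 1 - 1/\omega$. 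So the candidate value is $c^\star = 1 - 1/\omega$.

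Next I would show no feasible allocation can do better, i.e. $c \le 1 - 1/\omega$ for every $\bm\sigma \ge 0$ with $\sum_i \sigma_i = 1$ satisfying $(\network\bm\sigma)_i = c$ whenever $\sigma_i > 0$. Let $A$ be the active set of such a $\bm\sigma$ and let $\subnetwork$ be the induced (unweighted, symmetric, zero-diagonal) subnetwork on $A$. Summing the active-set constraint weighted by $\sigma_i$ gives $\sum_{i \in A} \sigma_i (\network\bm\sigma)_i = c \sum_{i\in A}\sigma_i = c$, and the left side equals $\bm\sigma^\top \subnetwork \bm\sigma = 2\sum_{\{i,j\}\in E(\subnetwork)} \sigma_i\sigma_j$. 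Thus $c = 2\sum_{\{i,j\}\in E(\subnetwork)}\sigma_i\sigma_j = \|\bm\sigma\|_1^2 - \sum_i \sigma_i^2 - 2\sum_{\{i,j\}\notin E(\subnetwork), i\neq j}\sigma_i\sigma_j \le 1 - \sum_i \sigma_i^2$. This is precisely the Motzkin–Straus quadratic form: the maximum of $\sum_{\{i,j\}\in E}\sigma_i\sigma_j$ over the simplex equals $\tfrac12(1 - 1/\omega(\subnetwork))$ and $\omega(\subnetwork) \le \omega(\network) = \omega$, so $c \le 1 - 1/\omega(\subnetwork) \le 1 - 1/\omega = c^\star$. (Alternatively one can give a self-contained argument: among active agents, if $\sigma_i$ is the largest share there must be some active $j$ with $G_{ij}=0$ unless $A$ is a clique, and a transfer argument moving weight from $j$ toward the clique weakly increases $c$; iterating pushes all mass onto a clique, reducing to the computation above.)

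Finally I would tie the two halves together: the reduction says optimal active sets are exactly those $A$ for which \eqref{eq:QP} attains its maximum; we have shown the maximum is $1 - 1/\omega$ and that it is attained by the uniform allocation on any maximum clique, so any maximum clique is the active set of an optimal allocation. One should also check the edge cases built into the model's assumptions—$\network$ not identically zero guarantees $\omega \ge 2$, so the clique is nontrivial and $c^\star > 0$, matching \autoref{p:diameter} (a clique trivially has diameter at most $2$, indeed at most $1$).

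The main obstacle is the upper bound $c \le 1 - 1/\omega$: this is where the combinatorial content lives, and it is essentially equivalent to the Motzkin–Straus theorem. If the paper prefers a self-contained treatment, the delicate point is making the "shift mass onto a clique" argument rigorous—one must argue that when the active subnetwork is not a clique, some weight-transfer between two non-adjacent active vertices (or from a non-adjacent vertex into the common neighborhood) weakly improves $c = \bm\sigma^\top\subnetwork\bm\sigma$, and that this process terminates at a clique; handling ties and ensuring the total stays fixed and nonnegative is the fiddly part. Everything else (feasibility of the candidate, the value computation, invoking the \eqref{eq:QP} reduction) is routine.
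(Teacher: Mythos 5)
Your proposal is correct, but it reaches the upper bound on $c$ by a genuinely different route than the paper. Both arguments start from the same reduction (\autoref{l:activesetopt}): normalize $\sum_i \sigma_i = 1$, note that the uniform allocation on a maximum clique of size $\overline{k}$ achieves $c = (\overline{k}-1)/\overline{k}$, and then show no balanced allocation can do better. For that last step the paper argues combinatorially and stays inside the balanced-allocation world: assuming $c > (\overline{k}-1)/\overline{k}$, it greedily constructs agents $i^0, i^1, \dots$ that are pairwise adjacent, using the balance condition at each step to show the common neighborhood $N_S(i^k)$ retains total equity at least $(k+1)c - k > 0$ and hence is nonempty, ultimately producing a clique of size $\overline{k}+1$ and a contradiction. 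You instead observe that summing the constraint $(\network\boldsymbol{\sigma})_i = c$ against the weights $\sigma_i$ yields the identity $c = \boldsymbol{\sigma}^\top \network \boldsymbol{\sigma} = 2\sum_{\{i,j\} \in E(\subnetwork)} \sigma_i\sigma_j$, which converts the constrained problem into an unconstrained maximization of the edge quadratic form over the simplex, and then you invoke the Motzkin--Straus theorem to cap this at $1 - 1/\omega$. Your route is shorter and exposes a clean structural fact (under balance, $c$ equals the Lagrangian quadratic form, so the whole extensive-margin problem for unweighted networks \emph{is} the Motzkin--Straus program); the cost is outsourcing the hard combinatorial direction to a cited classical theorem, whereas the paper's greedy construction is self-contained (and is, in effect, a re-derivation of that direction adapted to the balance condition). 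One small caveat for either write-up: the identity $\boldsymbol{\sigma}^\top\subnetwork\boldsymbol{\sigma} = 2\sum_{\{i,j\}\in E}\sigma_i\sigma_j$ and the Motzkin--Straus bound both presuppose a zero diagonal, so you should note (as the paper implicitly does) that self-links are excluded in the unweighted case.
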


When connections are unweighted, choosing a subset that is as densely connected as possible leads to at least as high a payoff as choosing a larger but more sparsely connected subset, even if all agents in the larger subset have higher degree.

The proof applies \autoref{l:activesetopt}: given an optimal allocation with an arbitrary active set, we find a clique within that active set for which the constant $c$ is as large. We produce such a clique exists by sequentially constructing a series of agents who are all connected to each other. To do so, the balanced equity condition for the optimal allocation has to be used carefully at each step of the construction.

In general, the active set need not be unique in unweighted networks, so there can be other optimal allocations giving the same payoff to the principal as a clique of maximum size. For example, in a star network, any set including the central node and at least one peripheral node is the active set at some optimal allocation.\footnote{Among allocations giving $s$ shares to agents, a given allocation is optimal if and only if it gives $s/2$ shares to the central agent and $s/2$ shares to the peripheral agents.} The next example shows that there can also be optimal allocations that differ more substantially from maximum cliques.

\begin{example} \label{ex:multiple_active_sets}
Consider an even number of agents $n$ arranged in a circle. Let $\mathbf{G}$ be the undirected network in which each agent is connected to all other agents except the diametrically opposite agent in the circle:  for all distinct $i$ and $j$ we let $G_{ij} = 1$ if $|i-j| \neq n/2$ and $G_{ij}=0$ if $|i-j| = n/2$. The network structure is shown in \autoref{l:circle}.

Suppose that we want to allocate $s$ shares to agents optimally. The maximum cliques have size $n/2$, and dividing the $s$ shares evenly within any maximum clique gives $(\subnetwork\sharesvector)_i = \frac{n-2}{n} \cdot s$. All agents in the full network have degree $n-2$, so dividing $s$ shares evenly among all agents also gives $(\subnetwork\sharesvector)_i = \frac{n-2}{n} \cdot s$. It follows that the set of all agents, as well as all maximum cliques, are possible active sets, depending on which optimal allocation is chosen.
\end{example}

\begin{figure}
    \centering
      
\begin{tikzpicture}
\foreach \x in {0,...,9} {
    \pgfmathtruncatemacro\num{\x+1}
    \node[circle, draw=black, fill=white, thick] (n\x) at ({360/10*\x}:2.5cm) {\num};
}

\foreach \x in {0,...,9}
    \foreach \y in {1,...,4} {
        \pgfmathtruncatemacro\result{{mod(\x+\y,10)}}
        \draw[shorten >=1pt, shorten <=1pt] (n\result) -- (n\x);
    }
\end{tikzpicture}

    \caption{Ten agent unweighted graph with each agent connected to all other agents except the diametrically opposite one.}
    \label{l:circle}
\end{figure}
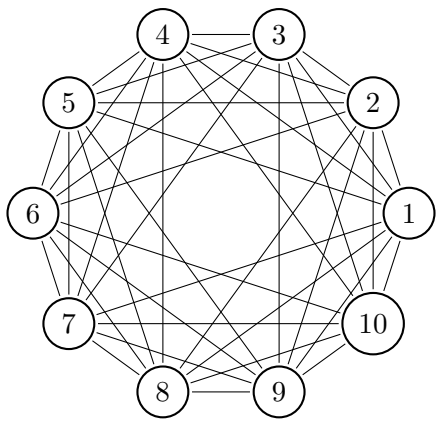

This example shows that the principal can be indifferent between very different active sets---a point which has implications for the welfare of agents, as we discuss further in \Cref{s:welfare}.

\section{Implications}
\label{sec:compstatics}

In this section, we explore some implications of our analysis that illuminate how the optimal contract depends on the environment. We focus on the effects of changes in the network $\network$ and the parameter $\beta$ describing the strength of complementarities. \Cref{sec:vary_network} examines how the equity allocation and the resulting team performance depend on the network of complementarities. The results provide some insights about which networks might be preferred by the principal and by agents. \Cref{s:beta_comp_stat} then asks how the share of equity retained by the principal (in the residual profit maximization problem) depends on the strength of complementarities. 

\subsection{Varying the network} \label{sec:vary_network}

Our first result describes how optimal allocations vary as the network changes. We write $\frac{\partial }{\partial G_{jk}}$ for the derivative in the weight $G_{jk}=G_{kj}$ of the link between $j$ and $k$. Recall that given an allocation, we write $\subnetwork$ for the adjacency matrix restricted to active agents.

\begin{proposition}
\label{p:compstaticmatrixcalc}
Suppose that under $\network$ there is a unique optimal equity allocation\footnote{We expect this hypothesis to be satisfied for generic networks.} $\optsharesvector$, with agents $i$, $j$, and $k$ all active. The derivative of agent $i$'s optimal share as we vary the weight of the link between $j$ and $k$ is
$$\frac{\partial \sharesagt{i}^*}{\partial G_{jk}} = - (\subnetwork^{-1})_{ik}\sigma_j^* - (\subnetwork^{-1})_{ij}\sigma_k^*+ \frac{\partial c}{\partial G_{jk}} \frac{\sigma_i^*}{c} .$$
\end{proposition}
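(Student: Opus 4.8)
The plan is to differentiate the defining relation of equity centrality, $\subnetwork \optsharesvector = c\,\mathbf{1}$ (which is precisely part (a) of \autoref{t:optsharescharacterization} restricted to the active set, with the normalization that makes the constant $c$ appear on the right-hand side), treating the active set as locally fixed. Since $i,j,k$ are all active and the optimal allocation is assumed unique, for small perturbations of $G_{jk}$ the active set does not change (the strict inequalities $\sigma^*_\ell>0$ persist, and no inactive agent is brought in), so $\optsharesvector$ and $c$ are differentiable in $G_{jk}$ and the relation $\subnetwork \optsharesvector = c\mathbf{1}$ continues to hold with the same index set. This is the step I expect to require the most care to state cleanly: one must invoke the genericity/uniqueness hypothesis to guarantee that the active set is locally constant and that $\subnetwork$ stays invertible, so that the implicit differentiation is legitimate; the rest is linear algebra.

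Next I would differentiate both sides with respect to $G_{jk}$. Writing $\subnetwork'$ for the entrywise derivative of $\subnetwork$, and recalling that the network is symmetric so that perturbing $G_{jk}=G_{kj}$ changes exactly the two entries $(j,k)$ and $(k,j)$ of $\subnetwork$, we get
\begin{equation*}
\subnetwork' \optsharesvector + \subnetwork \frac{\partial \optsharesvector}{\partial G_{jk}} = \frac{\partial c}{\partial G_{jk}} \mathbf{1}.
\end{equation*}
Here $\subnetwork' \optsharesvector$ is the vector whose $\ell$-th component is $\frac{\partial (\subnetwork\optsharesvector)_\ell}{\partial G_{jk}}$ holding $\optsharesvector$ fixed; since only the $(j,k)$ and $(k,j)$ entries of $\subnetwork$ move, this vector is $\sigma_k^* \mathbf{e}_j + \sigma_j^* \mathbf{e}_k$, where $\mathbf{e}_j,\mathbf{e}_k$ are standard basis vectors on the active set. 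Solving for the derivative of the share vector gives
\begin{equation*}
\frac{\partial \optsharesvector}{\partial G_{jk}} = \subnetwork^{-1}\left( \frac{\partial c}{\partial G_{jk}} \mathbf{1} - \sigma_k^*\mathbf{e}_j - \sigma_j^*\mathbf{e}_k \right) = \frac{\partial c}{\partial G_{jk}} \, \subnetwork^{-1}\mathbf{1} - \sigma_k^* (\subnetwork^{-1})_{\cdot j} - \sigma_j^* (\subnetwork^{-1})_{\cdot k}.
\end{equation*}

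Finally I would read off the $i$-th component. Using $\subnetwork^{-1}\mathbf{1} = \optsharesvector/c$ (again from part (a), with the appropriate normalization), the term $\frac{\partial c}{\partial G_{jk}}(\subnetwork^{-1}\mathbf{1})_i$ becomes $\frac{\partial c}{\partial G_{jk}}\,\sigma_i^*/c$, and symmetry of $\subnetwork^{-1}$ identifies $(\subnetwork^{-1})_{ij}$ with $(\subnetwork^{-1})_{\cdot j}$ evaluated at $i$. This yields exactly
\begin{equation*}
\frac{\partial \sharesagt{i}^*}{\partial G_{jk}} = -(\subnetwork^{-1})_{ik}\sigma_j^* - (\subnetwork^{-1})_{ij}\sigma_k^* + \frac{\partial c}{\partial G_{jk}}\frac{\sigma_i^*}{c},
\end{equation*}
as claimed. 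The only genuine obstacle is the first paragraph's local-constancy argument; the differentiation itself is a one-line perturbation of a linear system. If one wants an explicit formula for $\frac{\partial c}{\partial G_{jk}}$ as well, it can be recovered from the constraint $\sum_\ell \sigma_\ell^* = s$ (constant), i.e.\ $\mathbf{1}^\top \frac{\partial \optsharesvector}{\partial G_{jk}} = 0$, but that is not needed for the statement as given.
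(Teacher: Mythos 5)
Your proposal is correct and follows essentially the same route as the paper: the paper differentiates $\optsharesvector = c\,\subnetwork^{-1}\mathbf{1}$ using the derivative-of-inverse formula, which is the same computation as your implicit differentiation of $\subnetwork\optsharesvector = c\,\mathbf{1}$, and both then identify $(\subnetwork^{-1}\mathbf{1})_\ell$ with $\sigma^*_\ell/c$. Your added remark on local constancy of the active set under the uniqueness hypothesis is a point the paper leaves implicit, and is a reasonable precaution.
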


The value $c$ is the balanced equity in each neighborhood from \Cref{t:optsharescharacterization}(a). The proof is based on the matrix calculus expression \begin{equation}\label{e:matrixinversederiv}\frac{\partial \network(t)^{-1}}{\partial t} = - \network(t)^{-1} \frac{\partial \network(t)}{\partial t} \network(t)^{-1}\end{equation} for the derivative of the inverse of a matrix.

The result provides a fairly explicit expression for the impact of changing a link on equity allocations. However, calculating the change $\frac{\partial c}{\partial G_{jk}}$ in $c$ may be difficult under the residual profits objective, where the total amount of equity allocated can change as we strengthen a link. We can be more explicit under the success probability objective, since the total amount of equity allocated sums to $1$. In that case, we have an explicit expression $$c = \frac{1}{\textbf{1}^T\subnetwork^{-1}\textbf{1}}$$
for the total equity in each neighborhood. Differentiating this expression gives a version of \autoref{p:compstaticmatrixcalc} without an unknown value $c$. 

Under either objective, the ratio $\sigma_i^*/\sigma_{i'}^*$ between two agents' shares is independent of $c$, so (\ref{e:matrixinversederiv}) lets us calculate the change in this ratio (under either objective) as the link between $j$ and $k$ is strengthened.

The matrix inverse $\network^{-1}$ need not vary monotonically as $\network$ changes. This implies that equity centrality need not satisfy monotonicity properties that hold for standard centrality measures such as Bonacich centrality. As an illustration, we return to our example of three-agent networks from  \Cref{sec:active}. Details are again deferred to \autoref{a:three_agent}. 

\setcounter{example}{0}

\continueexample{

Recall that we normalized $G_{12}=1$. We will vary the weight $G_{23}$ over the interval $(1-G_{13},G_{13})$.  We will further suppose $G_{13}>\frac12$; under these conditions, there is a unique optimal allocation and all three agents are active under this allocation.

We begin by studying the effect on optimal shares $\sigma_i^*$---or, in other words, equity centralities. Under the success probability objective, we show these centralities can be non-monotonic in an agent's own links. There exists a threshold $g^* \in (1-G_{13},G_{13})$ such that for $G_{23} \in (1-G_{13},g^*)$, increasing the weight $G_{23}$ decreases the share $\sigma_2^*$ allocated to agent $2$. So strengthening one of an agent's links can decrease his share of output under the optimal allocation. Intuitively, as $G_{23}$ is strengthened, the principal would like to increase agent $3$'s shares, and initially is willing to do so at the expense of agent $2$. (When $G_{23} \in (g^*, G_{23})$, meanwhile, increasing this weight decreases the share $\sigma_1^*$ allocated to agent $1$.)

Under the residual profit objective, comparative statics are more challenging because of the additional choice of how much equity to allocate (corresponding to the last term of the formula in \Cref{p:compstaticmatrixcalc}).  We turn to a numerical example to illustrate that non-monotonicities like those discussed above can nevertheless continue to be present. \autoref{fig:optshares_payoffs} shows the optimal equity shares and the corresponding equilibrium payoffs as we vary the link weight $G_{23}$, under parameter values specified in the caption. 
\autoref{fig:optshares} depicts the optimal equity allocation of each agent as a function of $G_{23}$. The equity allocation is non-monotonic in own links: increasing $G_{23}$ initially decreases agent $2$'s equity, mirroring our analytical result described above.

The numerical example also illustrates a corresponding non-monotonicity in payoffs: strengthening one of an agent's links can decrease his equilibrium payoff under the optimal contract. \autoref{fig:optpayoff} depicts the equilibrium payoffs under the optimal equity allocation as a function of $G_{23}$. Strengthening the link between agents $2$ and $3$ can \emph{decrease} the resulting payoffs for agents $1$ and $2$. This contrasts with the standard network games intuition: under a fixed equity allocation, all agents' payoffs are monotone in the network. In the present setting, however, agent $2$ can benefit from weakening one of his links. This suggests a tension between the network formation incentives of the principal and the agents. Agents may not be willing to form links that would benefit the principal or the team as a whole, even if link formation is not costly.

\begin{figure}
    \centering
    \begin{subfigure}{0.7\textwidth}
    \centering
    \includegraphics[width=\textwidth]{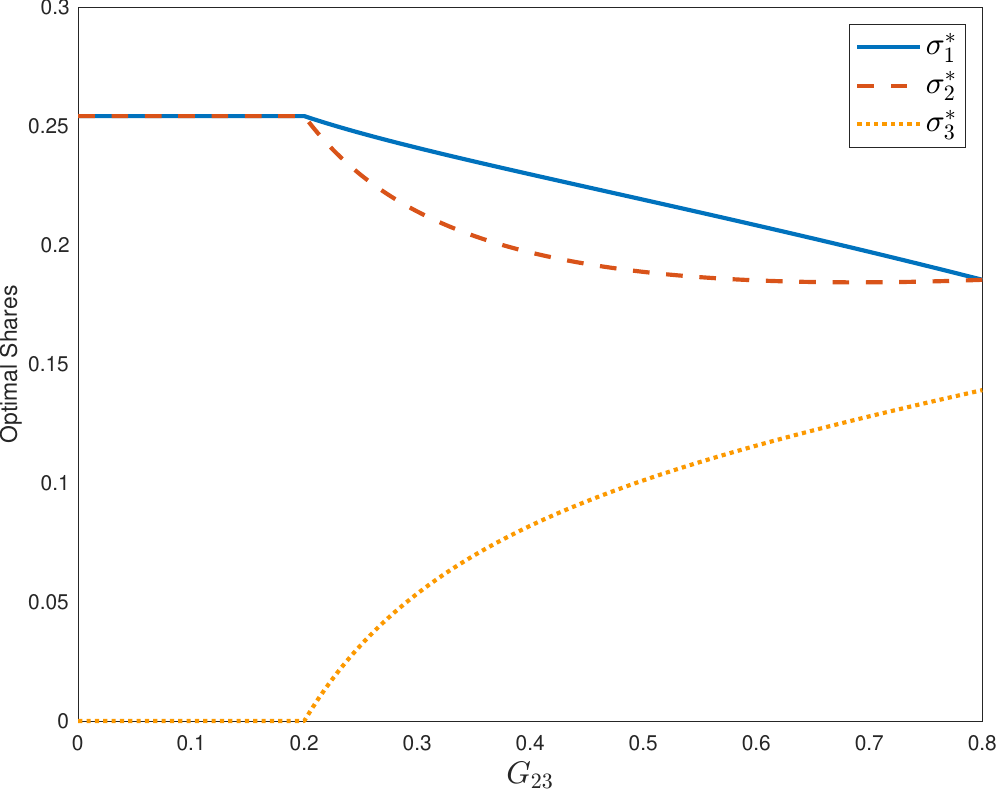}
    \caption{Optimal shares}
    \label{fig:optshares}
    \end{subfigure}
    \begin{subfigure}{0.7\textwidth}
        \centering
     \includegraphics[width=\textwidth]{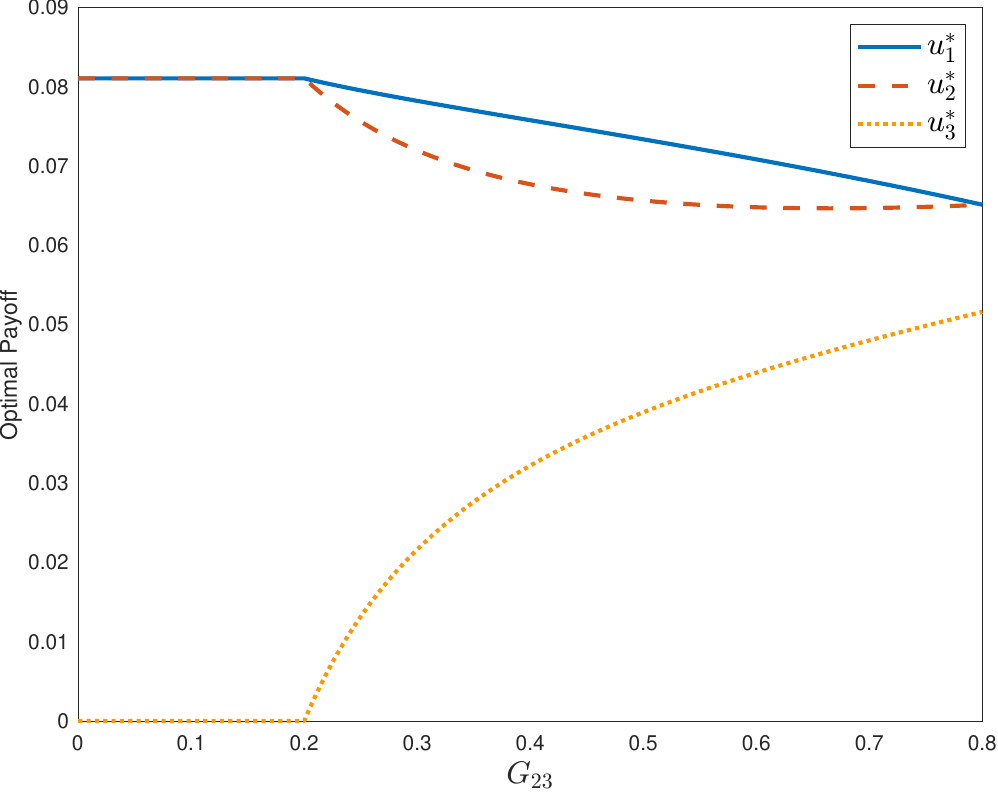}
     \caption{Payoffs under optimal shares}
     \label{fig:optpayoff}
    \end{subfigure}
    \caption{The optimal share allocation and resulting equilibrium payoffs as a function of the weight $G_{23}$. We work with the residual profits objective. Here $G_{13}=0.8$ and $\beta=0.1$, while $P(Y) = \min\{0.9Y,1\}$ (the kink is not relevant for the principal's problem). In both diagrams, the curve corresponding to agent $1$ is the topmost (solid blue) one; the curve corresponding to agent $2$ is the second from the top (dashed red); and the curve corresponding to agent $3$ is the lowest (dotted orange) one.}
    \label{fig:optshares_payoffs}
\end{figure}
}

We next look at how team performance under an optimal allocation varies as the network changes. Recall that $Y^{*}$ denotes the equilibrium team performance under an optimal allocation. Then $\frac{\partial Y^{*}}{\partial G_{ij}}$ is the change in this team performance as the weight on the link between agent $i$ and $j$ increases.

\begin{proposition}
    \label{p:compstaticoveralloutput}
    Suppose $\optsharesvector$ is an optimal allocation. Then the change in equilibrium team performance as $G_{ij}$ varies can be expressed as
    $$\frac{\partial Y^{*}}{\partial G_{ij}} = \sigma^{*}_{i}\sigma^{*}_{j}  h,$$
    where $h$ does not depend on the identities of $i$ or $j$.
    
    \end{proposition}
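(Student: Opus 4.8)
The plan is to show that, along the locus of optimal contracts, the equilibrium team performance depends on the network \emph{only} through a single scalar---the aggregate equity centrality $r := \mathbf{1}^{T}\subnetwork^{-1}\mathbf{1}$ of the active subnetwork---and then to differentiate that scalar. Fix a network $\network$ at which the optimal allocation $\optsharesvector$ is unique (the generic case, which is also the setting in which $\partial Y^{*}/\partial G_{ij}$ is well-defined), and let $A^{*}$ be the induced active set with adjacency submatrix $\subnetwork$. If $i$ or $j$ lies outside $A^{*}$, then $\subnetwork$ does not involve the entry $G_{ij}$, and (by uniqueness) $A^{*}$ stays optimal under small perturbations of $G_{ij}$, so $\partial Y^{*}/\partial G_{ij}=0=\sigma_i^{*}\sigma_j^{*}h$ holds trivially. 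So assume $i,j\in A^{*}$.

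First I would set up the scalar reduction. By \autoref{t:optsharescharacterization}, on any candidate active set $A$ the optimal shares are proportional to the equity centrality vector $\mathbf{x}_A=\subnetwork_A^{-1}\mathbf{1}$: one has $\optsharesvector = c\,\mathbf{x}_A$ with $c=(\subnetwork_A\optsharesvector)_i$, hence $s := \sum_i\sigma_i^{*} = c\,r_A$ with $r_A := \mathbf{1}^{T}\subnetwork_A^{-1}\mathbf{1}$, so $c=s/r_A$; and a one-line computation using balanced neighborhood equity gives $\optsharesvector^{T}\subnetwork_A\optsharesvector = c\,\mathbf{1}^{T}\optsharesvector = sc$. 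Substituting into \eqref{eq:eqlbtotalproduction}, the induced team performance equals $\Psi(s,r_A)$ for a \emph{single} function $\Psi$ (independent of $A$) which, by the monotonicity noted right after \eqref{eq:eqlbtotalproduction}, is strictly increasing in $c=s/r_A$ for fixed $s$. Under the success-probability objective $s=1$; under the residual-profit objective the optimal total equity solves a first-order condition in $s$ that depends on $A$ only through $r_A$, hence is a function $\hat s(r_A)$. In both cases, since $\Psi$ is increasing in $c$, the principal's optimal active set minimizes $r_A$ over valid active sets, so $Y^{*} = \psi(r^{*})$ where $r^{*}=r_{A^{*}}=\min_A r_A$ and $\psi$ is a fixed smooth function of one variable (obtained from $\Psi$ and $\hat s$ by the implicit function theorem), carrying no dependence on which link is later perturbed.

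Then I would differentiate. Since $A^{*}$ is the unique minimizer, it remains optimal under small perturbations of $G_{ij}$, so the envelope theorem for the minimum gives $\partial Y^{*}/\partial G_{ij} = \psi'(r^{*})\,\partial r_{A^{*}}/\partial G_{ij}$. Applying the matrix-inverse derivative \eqref{e:matrixinversederiv} to $r_{A^{*}}=\mathbf{1}^{T}\subnetwork_{A^{*}}^{-1}\mathbf{1}$, and using that $\partial\subnetwork_{A^{*}}/\partial G_{ij}$ is the symmetric matrix with ones in positions $(i,j)$ and $(j,i)$ and zeros elsewhere, together with symmetry of $\subnetwork_{A^{*}}^{-1}$, yields $\partial r_{A^{*}}/\partial G_{ij} = -2\,(\subnetwork_{A^{*}}^{-1}\mathbf{1})_i\,(\subnetwork_{A^{*}}^{-1}\mathbf{1})_j$. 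Finally, since $\optsharesvector = c\,\mathbf{x}$ on $A^{*}$ we have $(\subnetwork_{A^{*}}^{-1}\mathbf{1})_i = \sigma_i^{*}/c$, so $\partial Y^{*}/\partial G_{ij} = -\tfrac{2\psi'(r^{*})}{c^{2}}\,\sigma_i^{*}\sigma_j^{*}$, which is of the claimed form with $h := -2\psi'(r^{*})/c^{2}$ independent of $i$ and $j$ (and positive, consistent with $\psi$ decreasing in $r$).

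The main obstacle is the scalar reduction in the second paragraph: establishing rigorously that optimal team performance collapses to a function of $r^{*}$ alone. This requires combining \autoref{t:optsharescharacterization} with the closed form \eqref{eq:eqlbtotalproduction}, the monotonicity of its right-hand side in $c$, and---under the residual-profit objective---the observation that the optimal total equity is itself pinned down by $r$, plus some care with genericity (uniqueness of $A^{*}$, differentiability, and the resulting local constancy of $A^{*}$) so the envelope step is valid. An alternative, more computational route would use the envelope theorem in the principal's own choice variables to reduce $\partial Y^{*}/\partial G_{ij}$ to the partial derivative of $Y(\eqlbactions(\optsharesvector,\network))$ with $\optsharesvector$ held fixed, then evaluate that derivative directly---here \autoref{t:optsharescharacterization} makes $\mathbf{1}^{T}\bigl(\mathbf{I}-P'(Y^{*})\beta\optsharesmatrix\network\bigr)^{-1}\mathbf{1}$ collapse via the eigenvector property, so the only surviving term is $\beta a_i^{*}a_j^{*}=\beta\mu^{2}\sigma_i^{*}\sigma_j^{*}$ up to a feedback factor---but this needs the extra first-order condition in the total equity for the residual-profit objective, which is why I prefer the scalar-reduction route.
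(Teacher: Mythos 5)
Your main route is correct and genuinely different from the paper's. The paper proceeds computationally: it invokes the envelope theorem to hold $\optsharesvector$ fixed, implicitly differentiates the equilibrium system to obtain $\partial \mathbf{a}^{*}/\partial G_{ij}$ (whose forcing term is the rank-one vector with entries $\sigma_i a_j^{*}$ and $\sigma_j a_i^{*}$), substitutes into $\partial Y/\partial G_{ij}$, and only at the end uses \autoref{t:optsharescharacterization} (actions proportional to shares) to collapse everything to $\beta\sigma_i^{*}\sigma_j^{*}P'(Y^{*})^{2}$ times an explicit $i,j$-independent factor. Your argument instead isolates \emph{why} the product structure appears: at the optimum the performance depends on the network only through the scalar $r=\mathbf{1}^{T}\subnetwork^{-1}\mathbf{1}$, and differentiating $\mathbf{1}^{T}\subnetwork^{-1}\mathbf{1}$ in a single symmetric entry via \eqref{e:matrixinversederiv} mechanically yields $-2(\subnetwork^{-1}\mathbf{1})_i(\subnetwork^{-1}\mathbf{1})_j=-2\sigma_i^{*}\sigma_j^{*}/c^{2}$. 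The scalar reduction itself is sound: it is exactly \autoref{l:activesetopt} plus $c=s/r_A$ and the observation that both \eqref{eq:eqlbtotalproduction} and the principal's first-order condition in $s$ involve the network only through $r$. Your route buys a conceptual explanation and, notably, handles the residual-profit objective more transparently than the paper does: there the total equity $\hat s(r)$ re-optimizes as $G_{ij}$ varies, and your chain rule through $\psi(r)=\Psi(\hat s(r),r)$ carries that term along (still proportional to $\sigma_i^{*}\sigma_j^{*}$), whereas the paper's envelope step---which strictly speaking applies to the principal's objective rather than to $Y^{*}$ itself---quietly holds $\sharesvector$ fixed. What the paper's route buys in exchange is a fully explicit closed form for $h$ in terms of $\beta$, $c$, $P'$, $P''$; your $h=-2\psi'(r^{*})/c^{2}$ is only implicit. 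The remaining care you flag (uniqueness of the minimizing active set so that it is locally constant, invertibility of $\subnetwork$, differentiability of $\hat s$) is genuinely needed but is of the same genericity type the paper assumes in \autoref{p:compstaticmatrixcalc}, so I do not regard it as a gap.
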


The proposition says that the increase in output from strengthening a link is precisely proportional to the product of the equity shares given to the two agents connected by that link. The proof gives an explicit formula for the quantity $h$, which depends on the model parameters and the allocation.

The proposition has implications for a designer who can make small changes in the network of complementarities. If the principal could marginally strengthen some links, she would want to focus on links between pairs of agents with high equity centralities. This is consistent with the intuition from \Cref{sec:active} that highly connected groups of agents are especially productive under equity compensation.

\subsection{Varying complementarities}\label{s:beta_comp_stat}

We now turn to how outcomes change as the complementarity parameter $\beta$ increases. Recall from \Cref{p:invariance}
that ratios of optimal shares do not depend on the value of $\beta$. But under the residual profits objective, we can ask how the total fraction of shares allocated to agents depends on $\beta$.

We study the comparative static in the special case when $P(\cdot)$ is linear in the range of feasible team performances. We assume for simplicity that the optimal allocation is unique, but could easily relax this assumption. The principal faces a trade-off between keeping a larger share of the profits and using a larger share to encourage workers to exert more effort. The following result states that when complementarities in production are larger,  it is optimal to keep a smaller share of a larger pie.

\begin{proposition}
    \label{p:compstatictotalequity}
    Suppose that $P(Y) = \alpha Y$ on an interval $[0,\overline{Y}]$ containing the equilibrium team performance under any feasible allocation and that there is a unique optimal allocation $\optsharesvector$. Under the residual profits objective, the sum of agents' equity shares under the optimal allocation is increasing in $\beta$, i.e., $$\frac{\partial \left(\sum_{i \in N}\sharesagt{i}^{*}\right)}{\partial \beta} > 0.$$
\end{proposition}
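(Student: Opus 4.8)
The plan is to reduce the principal's problem to a one-dimensional optimization over the total equity $s := \sum_{i\in N}\sharesagt{i}$ allocated to agents, and then run a monotone-comparative-statics argument after passing to logarithms. First I would use \Cref{p:invariance} together with the uniqueness hypothesis: the optimal active set and the ratios among the optimal shares are the same for all $\beta$, so the optimal allocation may be written as $\optsharesvector(\beta) = s(\beta)\,\widehat{\sharesvector}$ for a fixed probability vector $\widehat{\sharesvector}$ (proportional, by \Cref{t:optsharescharacterization}, to the equity-centrality vector $\subnetwork^{-1}\mathbf 1$ of the fixed active subnetwork $\subnetwork$). Consequently the constant $c$ of \Cref{t:optsharescharacterization}(a) equals $c = s/D$, where $D := \mathbf 1^{T}\subnetwork^{-1}\mathbf 1 > 0$ depends only on the network and the active set, not on $\beta$. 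So $s(\beta)$ is precisely the value of $s$ that maximizes the residual-profit objective along the fixed ray $\{s\,\widehat{\sharesvector}: s \ge 0\}$.

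Second, I would substitute $P'(Y^*) = \alpha$ and $c = s/D$ into the closed form \eqref{eq:eqlbtotalproduction}. Writing $z := \alpha\beta c = \alpha\beta s / D$ and collapsing the bracket to $\frac{\alpha(2-z)}{2(1-z)^2}$, the objective $(1-s)P(Y^*)$ equals, up to the positive constant $\alpha^2$, the function
\[
  \psi(s,\beta) \;=\; \frac{(1-s)\,s\,(2-z)}{2(1-z)^2}, \qquad z = \frac{\alpha\beta s}{D}.
\]
Setting $\lambda := D/(\alpha\beta)$, so that $z = s/\lambda$, $1-z = (\lambda-s)/\lambda$ and $2-z = (2\lambda-s)/\lambda$, this becomes
\[
  \psi(s,\lambda) \;=\; \frac{s\,(1-s)\,\lambda\,(2\lambda-s)}{2(\lambda-s)^{2}}.
\]
The hypothesis that the linear region $[0,\overline Y]$ contains the equilibrium team performance of \emph{every} feasible allocation (together with finiteness of equilibrium output) forces $z<1$ along this ray for every $s\le 1$, hence $\lambda > 1$; thus $0 < \lambda - s < 2\lambda - s$ for all $s\in[0,1]$, and $\psi$ is smooth and strictly positive on $(0,1)$ while vanishing at both endpoints. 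So the maximizer $s^*(\lambda)\in(0,1)$ is interior and, by hypothesis, unique, and at it the first-order condition $\partial_s\log\psi = 0$ holds.

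Third, I would read off the comparative static from log-supermodularity. Taking logs,
\[
  \log\psi \;=\; \log s + \log(1-s) + \log\lambda + \log(2\lambda-s) - 2\log(\lambda-s) - \log 2 ,
\]
so a direct computation gives
\[
  \frac{\partial^{2}\log\psi}{\partial s\,\partial\lambda} \;=\; \frac{2}{(2\lambda-s)^{2}} - \frac{2}{(\lambda-s)^{2}} \;<\; 0 ,
\]
because $2\lambda-s > \lambda-s > 0$. Hence $\log\psi$, and therefore $\psi$, has strictly decreasing differences in $(s,\lambda)$. By Topkis's theorem $s^*(\lambda)$ is nonincreasing, and it is strictly decreasing: if $s^*$ agreed at $\lambda_1 < \lambda_2$ then $\partial_s\log\psi(s^*,\lambda_2) - \partial_s\log\psi(s^*,\lambda_1) = \int_{\lambda_1}^{\lambda_2}\partial^2_{s\lambda}\log\psi\,d\lambda < 0$, contradicting that both are zero. (Equivalently, one may apply the implicit function theorem to the first-order condition, using the strict second-order condition at the unique interior optimum, to obtain $\partial s^*/\partial\lambda = -\,\partial^2_{s\lambda}\log\psi\,/\,\partial^2_{ss}\log\psi < 0$ and hence differentiability of $s^*$.) Since $\lambda = D/(\alpha\beta)$ is strictly decreasing in $\beta$, we conclude that $s^* = \sum_{i\in N}\sharesagt{i}^*$ is strictly increasing in $\beta$.

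The step I expect to be the crux is the first one: extracting from \Cref{p:invariance} and \Cref{t:optsharescharacterization} that the whole problem collapses to $\max_{s\in[0,1]}\psi(s,\beta)$ along a $\beta$-independent ray, and then pinning down the feasible range of $s$ precisely enough to guarantee both the validity of the closed form for $Y^*$ and interiority of the optimum (this is where $\lambda>1$, equivalently $z<1$ throughout $[0,1]$, must be drawn from the hypotheses). After that reduction, the inequality $\partial^2_{s\lambda}\log\psi<0$ is immediate and unambiguous; the only remaining care is upgrading Topkis's weak monotonicity to the strict statement, which interiority and the uniqueness assumption deliver.
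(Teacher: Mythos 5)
Your reduction and closed form match the paper's exactly: the paper also fixes the optimal active set, uses the balanced-equity condition to get $c=s/k^*$ with $k^*=\mathbf 1^T\mathbf G^{-1}\mathbf 1$ (your $D$), derives the same bound $\alpha\beta<k^*$ from feasibility via the spectral-radius/eigenvalue argument, and arrives at the same one-dimensional objective $V(s,\beta)=\alpha^2 s(1-s)\bigl(\tfrac{1}{1-\alpha\beta s/k^*}+\tfrac{\alpha\beta s/(2k^*)}{(1-\alpha\beta s/k^*)^2}\bigr)$, which is your $\alpha^2\psi$. Where you diverge is the final monotone-comparative-statics step. The paper differentiates $V$ in $s$, isolates the cubic numerator $p(s,\beta)=-(\beta\alpha)^2s^3+3\beta\alpha k^*s^2-4(k^*)^2s+2(k^*)^2$, shows $p$ is strictly decreasing in $s$ on $[0,1]$ (hence a unique critical point, located in $(\tfrac12,1)$) and strictly increasing in $\beta$, and reads off the monotonicity of the root. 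You instead reparametrize by $\lambda=D/(\alpha\beta)$ and verify the single cross-partial inequality $\partial^2_{s\lambda}\log\psi=\tfrac{2}{(2\lambda-s)^2}-\tfrac{2}{(\lambda-s)^2}<0$, then invoke Topkis plus the integral trick for strictness. Your route is shorter and avoids the cubic algebra; it also cleanly sidesteps the question of whether $\psi$ has a unique critical point, since Topkis only concerns maximizers and the proposition's uniqueness hypothesis does the rest. What the paper's computation buys in exchange is uniqueness of the critical point and its location in $(\tfrac12,1)$, plus the strict sign $\partial p/\partial s<0$ that justifies the implicit function theorem and hence the literal derivative statement $\partial s^*/\partial\beta>0$; in your write-up the parenthetical IFT remark leans on a ``strict second-order condition'' that you have not verified (and $\log\psi$ is not obviously concave in $s$), so if you want the conclusion as a derivative rather than as strict monotonicity of $s^*(\beta)$, you should either check that condition or fall back on the paper's observation that the first-order condition has a strictly monotone numerator. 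This is a minor finishing touch, not a gap in the substance.
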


The basic idea behind the proof is that the benefits to retaining more of the firm are linear in the output while the benefits to allocating more shares to workers are convex, and become  steeper as complementarities increase.

If $P(Y)$ is strictly concave, there is a trade-off between the concavity of $P(Y)$ and the convexity of $Y(\actions)$. Depending on which effect is stronger, the fraction of shares allocated to agents may increase or decrease as complementarities grow stronger.

\section{Discussion}

\subsection{The balance condition in more general environments}

Our characterization of the optimal equity allocation relies on several features of our model. In particular, we assume quadratic functional forms for agents' utility and the joint output and assume that heterogeneity across agents arises only from their different network positions. If these assumptions are relaxed, the balanced neighborhood equity result will no longer hold exactly. Nevertheless, the key insight behind the result is more general: optimal incentives favor balancing the spillover effects of incentivizing higher actions.

In general, the principal will trade off the benefits of such balance with other concerns that could be introduced to the model. For example, if agents' individual returns to effort are heterogeneous, a trade-off arises between balancing spillovers and allocating equity to the most individually productive agents. A more complicated balance condition would then be relevant. 

Nevertheless, the forces underlying our main results would remain relevant, and would be the dominant ones in some cases---for instance, in the limit as the spillovers grow large. More precisely, consider extending our model to allow heterogeneous returns $b_i > 0$ to individual effort, so that team performance is
$$\production(\actions) = \sum_{i\in N}b_i\actionagt{i} + \frac{\beta}{2}\sum_{i,j \in N} G_{ij}a_i a_j.$$
Then our main results continue to hold in the limit $\beta \rightarrow \infty$. Intuitively, when complementarities can be sufficiently large, it becomes much more important to exploit those complementarities optimally (which requires balance) than to exploit the heterogeneity in individual productivities.

\subsection{Tightly-knit teams}

Our extensive margin results can be summarized as saying that the principal prefers to concentrate equity in teams whose members have strong mutual complementarities. In terms of interpretation, this need not mean that the agents involved work closely together or are nearby in an organizational sense---just that their efforts are highly complementary in producing the output.

The details of the extensive margin results depend on the specific structure of our model, but we believe the economic intuition underlying these results has broader implications. When the principal motivates an agent by giving him a larger share of the equity in a single common output, it dilutes the equity of the others. Strong complementarities among those getting equity shares countervail this dilution, and this is what makes tight-knit teams valuable to the principal. As we remarked in our discussion of the model's assumptions, in reality a principal may have signals of effort richer than we have studied---for example, outcomes reflecting contributions of specific organizational units. It may be interesting to study how a principal would optimally use multiple signals of this sort to provide incentives to a networked team.

\subsection{A tension between the principal's interests and agent welfare}\label{s:welfare}

\Cref{ex:multiple_active_sets} shows that the principal can be indifferent between active sets, and associated allocations of equity, that have very different welfare implications for workers. In that example, since the cost of effort is convex, agents are better off (on average) if the same performance is achieved by a larger team. But the principal is indifferent between two different team sizes. If the complementarities are perturbed slightly to strengthen those in some maximum clique, then the principal's indifference is broken and she has a strict preference for the compensation scheme that motivates a smaller team---and which happens to leave workers substantially worse off.

This is a consequence of the fact that, subject to paying out a certain total share in equity compensation, the principal is maximizing the probability of a project's success rather than utilitarian welfare. The mechanism of equity pay can do a very poor job of transmitting workers' interest in a more equal distribution of effort. This highlights an interesting tension between welfare and the principal's preferred mode of incentive-provision, and the binary-outcome model we work with brings it out particularly sharply.

\subsection{Connection to a spectral radius maximization problem}\label{s:spectral_radius}

The optimal allocation turns out to have a simple description in terms of a problem of maximizing a spectral radius: if $\boldsymbol{\sigma}$ solves the success probability optimization problem, then  $\boldsymbol{\Sigma}=\operatorname{diag}(\boldsymbol{\sigma})$ also maximizes the spectral radius $\rho(\sharesmatrix \network)$ among nonnegative vectors $\boldsymbol{\sigma}$ summing to $1$.\footnote{The spectral radius of a matrix, which we denote by $\rho(\cdot),$ is the largest magnitude of an eigenvalue of the matrix.} To show this, we show that when $\beta$ is large enough (so that very large spillovers are possible), the principal wants to choose shares $\sharesvector$ inducing a large spectral radius to capture these spillovers. By \autoref{p:invariance}, the optimal allocations do not depend on $\beta$, so in fact such a $\boldsymbol{\sigma}$ is optimal for any $\beta$. We formally state and prove the connection in \autoref{a:spectral_radius}.

An applied mathematics literature discusses spectral radius maximization problems of this form (e.g., \cite{elsner2015spectralmax}, \cite{nesterov2013optimizing}, and \cite*{axtell2009optimization}). Most closely related, \citep{elsner2015spectralmax} consider the same spectral radius maximization problem and discuss algorithms for efficiently computing the optimal diagonal matrix $\sharesmatrix$. Our analysis turns out to provide several insights into this problem. For instance, \autoref{t:unweighted_active_set} implies a characterization of the highest achievable spectral radius when $\network$ is the adjacency matrix of an unweighted network, showing that it is achieved by a dividing shares $\sigma_i$ equally among the members of a maximum clique.

\bibliographystyle{ecta}
{\footnotesize \bibliography{references.bib}}

\appendix

\section{Proofs}
\label{a:appendix_proofs}

\begin{proof}[Proof of \autoref{p:uniqueeq}]
Fixing shares $\sharesvector$ and others' strategies, agent $i$'s expected payoff is strictly concave in his action $a_i$ because $Y(\actions)$ is linear in $a_i$, the success probability $P(Y)$ is concave in $Y$, and the effort cost is strictly convex. So agent $i$ has a unique best response, meaning we need  only consider pure-strategy equilibria. Moreover marginal costs  at $a_i=0$ are zero while marginal benefits at $a_i=0$ are strictly positive if $\sigma_i>0$ and zero if $\sigma_i=0$. Since $U_i$ is concave in $a_i$, this rules out a boundary solution where the first-order condition $\frac{\partial U_i}{\partial a_i}=0$ is not satisfied. So the first-order condition  is necessary and sufficient for a best-response.

It follows that the following equations are necessary and sufficient for the vector $\eqlbactions$ to be a Nash equilibrium: 
$$[\mathbf{I}-P'(Y^*)\beta \sharesmatrix \network]\eqlbactions = P'(Y^*)\sharesvector \text{ and }Y^* = \production(\eqlbactions).$$
For a fixed $Y^*$, there exists a vector $\mathbf{a}^*$ solving the first equation if and only if $P'(Y^*)\beta \rho(\sharesmatrix\network)<1$, where $\rho(\sharesmatrix \network)$ is the spectral radius of $\sharesmatrix \network$. Then equilibrium actions are characterized by
$$\eqlbactions=[\mathbf{I}-P'(Y^*)\beta \sharesmatrix \network]^{-1}P'(Y^*)\sharesvector.$$

Given a constant $y$ such that $P'(y)\beta \rho(\sharesmatrix \network)<1$, we can define actions by $$\eqlbactions(y)=[\mathbf{I}-P'(y)\beta \sharesmatrix \network]^{-1}P'(y)\sharesvector.$$
Solutions of the first-order conditions then correspond to solutions to
$$Y(\eqlbactions(y)) =y.$$

The function $Y(\eqlbactions(y))$ is strictly increasing in each coordinate of $\eqlbactions(y)$ while each coordinate of $\eqlbactions(y)$ is weakly decreasing in $y$ since $P'(\cdot)$ is weakly decreasing (by our assumption $P(\cdot)$ is concave). So  $Y(\eqlbactions(y))$ is decreasing, meaning there is at most one solution to $Y(\eqlbactions(y)) = y$. It remains to show a solution to this equation exists.

We claim that we can find $y$ such that $Y(\eqlbactions(y)) \geq y$ and $P'(y)\beta \rho(\sharesmatrix \network)<1$. If $P'(0)\beta \rho(\sharesmatrix \network)<1$, the claim holds with $y=0$ since $Y(\eqlbactions(0)) \geq 0$. Otherwise, define $y_0$ by $P'(y_0)\beta \rho(\sharesmatrix \network)=1$. A solution to this equation exists since $P'(y)$ is continuous and converges to zero as $y\rightarrow \infty$. Then $Y(\eqlbactions(y)) \rightarrow \infty$ as $y \rightarrow y_0$ from above, so we have $Y(\eqlbactions(y_0+\epsilon)) \geq y_0+\epsilon$ for $\epsilon>0$ sufficiently small. This completes the proof of the claim.

Since $Y(\eqlbactions(y))$ is decreasing in $y$, we can also choose $y$ large enough such that $y > Y(\eqlbactions(y))$. Since $Y(\eqlbactions(y))$ is continuous in $y$, by the intermediate value theorem this function has a fixed point. We conclude that there exists a unique solution to $Y(\eqlbactions(y))  = y$ and a corresponding unique profile $\eqlbactions$ of equilibrium actions.
\end{proof}

\begin{proof}[Proof of \autoref{t:optsharescharacterization}]
\textbf{Part (a)}: Suppose $\sharesvector^*$ is a solution for either objective. Then it must solve the following optimization problem:
\begin{equation}
     \begin{aligned}
     \text{choose $\sharesvector$, $\mathbf{a}$} & \text{ to maximize} & \quad & \production(\mathbf{a}) \\
       & \text{subject to} & & \mathbf{a} = [\mathbf{I}-P'(Y(\mathbf{a}))\beta \sharesmatrix \network]^{-1} P'(Y(\mathbf{a}))\sharesvector \\
       & & & \sharesagt{i} \geq 0 \quad \text{ for all } \, i \in N, \\  & & &   \sum_{i\in N} \sharesagt{i} \leq \sum_{i\in N} \sharesagt{i}^*.
     \end{aligned} \label{problem1}  \end{equation}
The first constraint applies the equilibrium characterization in \Cref{p:uniqueeq}. The content of the final constraint is that if $\sharesvector^*$ is optimal among all feasible allocations, it must maximize equilibrium team performance among allocations distributing at most a fraction $\sum_i \sharesagt{i}^*$ of shares to agents (where this sum will be equal to $1$ under the success probability objective).

Our argument proceeds in several steps.

\medskip

\textit{\textbf{Step 1}: Reduce to the case where all agents are active.}  
We want to show that $(\network \optsharesvector)_i$ is constant across active agents. The values of $(\network \optsharesvector)_i$ for active agents depend only on the restriction of the adjacency matrix to the subnetwork of active agents. Moreover, the allocation $\optsharesvector$ remains optimal for this subnetwork. So without loss of generality, we can drop any inactive agents and assume that all agents in $N$ are active.

\medskip

\textit{\textbf{Step 2}: Reduce to a related problem that avoids the curvature of $P$.} A challenge in studying (\ref{problem1}) is that as we optimize over allocations, the $P'(Y)$ term appearing in the formula for equilibrium actions may change. This step performs a reduction that avoids this complication.

We show any optimal $\sharesvector^*$ must also solve a dual problem of obtaining a given team performance with a minimum fraction of shares distributed to agents.

\begin{lemma}
    
Any solution $(\sharesvector^*,\eqlbactions)$ to (\ref{problem1}) solves the minimization problem
\begin{equation}
     \begin{aligned}
     \text{choose $\sharesvector$, $\mathbf{a}$} & \text{ to minimize} & \quad & \sum_{i\in N} \sigma_i \\
       & \text{subject to} & & Y(\mathbf{a}) = y^* \\
        & & &   \mathbf{a} = [\mathbf{I}-P'(y^*)\beta \sharesmatrix \network]^{-1} P'(y^*)\sharesvector \\
       & & & \sharesagt{i} \geq 0 \quad \text{ for all } \, i \in N.
     \end{aligned} \label{problem2} %
     \end{equation}
\end{lemma}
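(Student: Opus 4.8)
The plan is to prove the two halves of "$(\sharesvector^*,\eqlbactions)$ is a minimizer of \eqref{problem2}" separately, where throughout $y^*:=Y(\eqlbactions)$. Feasibility of $(\sharesvector^*,\eqlbactions)$ for \eqref{problem2} is essentially free: since $(\sharesvector^*,\eqlbactions)$ solves \eqref{problem1}, the profile $\eqlbactions$ is the Nash profile for $\sharesvector^*$, so by \autoref{p:uniqueeq} it satisfies $\eqlbactions=[\mathbf I-P'(Y(\eqlbactions))\beta\sharesmatrix^*\network]^{-1}P'(Y(\eqlbactions))\sharesvector^*$; substituting $Y(\eqlbactions)=y^*$ turns this into exactly the equilibrium constraint of \eqref{problem2}, and the remaining constraints $\sharesvector^*\ge\mathbf 0$ and $Y(\eqlbactions)=y^*$ hold by definition.

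For optimality I would argue by contradiction. (We may assume $y^*>0$: otherwise no agent is active at the optimum, and after Step~1's reduction there is nothing to prove.) Suppose some $(\sharesvector',\mathbf a')$ feasible for \eqref{problem2} has $\sum_i\sigma_i'<\sum_i\sigma_i^*$; since $Y(\mathbf a')=y^*>0$ this forces $\sharesvector'\neq\mathbf 0$. Because $Y(\mathbf a')=y^*$, the constraint $\mathbf a'=[\mathbf I-P'(y^*)\beta\operatorname{diag}(\sharesvector')\network]^{-1}P'(y^*)\sharesvector'$ is precisely the Nash system \eqref{eq:Nash_conditions} for the shares $\sharesvector'$ (with equilibrium performance $y^*$), so by \autoref{p:uniqueeq} $\mathbf a'$ is the unique equilibrium profile under $\sharesvector'$, the induced team performance equals $y^*$, and $P'(y^*)\beta\rho(\operatorname{diag}(\sharesvector')\network)<1$. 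The crux is then a monotonicity claim: writing $\widehat Y(\cdot)$ for equilibrium team performance as a function of the share vector, $\widehat Y(t\sharesvector')>y^*$ for every $t>1$ with $P'(y^*)\beta t\,\rho(\operatorname{diag}(\sharesvector')\network)<1$.

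To prove the claim I would reuse the scalar fixed-point device from the proof of \autoref{p:uniqueeq}. For a share vector $\sharesvector$ and a scalar $y$ with $P'(y)\beta\rho(\operatorname{diag}(\sharesvector)\network)<1$, set
\[
\Phi_{\sharesvector}(y)\;=\;Y\!\Big(P'(y)\textstyle\sum_{k\ge 0}\big(P'(y)\beta\operatorname{diag}(\sharesvector)\network\big)^{k}\sharesvector\Big),
\]
so that $\widehat Y(\sharesvector)$ is the unique fixed point of the decreasing map $y\mapsto\Phi_{\sharesvector}(y)$. Because $\operatorname{diag}(t\sharesvector')\network=t\operatorname{diag}(\sharesvector')\network$ and every matrix in sight is entrywise nonnegative, the $k$-th term of the Neumann series inside $\Phi_{t\sharesvector'}(y)$ equals $t^{k+1}$ times the $k$-th term inside $\Phi_{\sharesvector'}(y)$; hence for $t\ge 1$ the argument of $\Phi_{t\sharesvector'}(y)$ dominates that of $\Phi_{\sharesvector'}(y)$ entrywise, strictly in the coordinates with $\sigma_i'>0$, and since $Y$ is strictly increasing in each coordinate and $\sharesvector'\neq\mathbf 0$ this gives $\Phi_{t\sharesvector'}(y)>\Phi_{\sharesvector'}(y)$ wherever both are defined. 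Evaluating at $y=y^*$, where $\Phi_{\sharesvector'}(y^*)=y^*$, yields $\Phi_{t\sharesvector'}(y^*)>y^*$; as $\Phi_{t\sharesvector'}$ is decreasing with a unique fixed point, that fixed point $\widehat Y(t\sharesvector')$ must exceed $y^*$.

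Finally, choosing $t>1$ close enough to $1$ that simultaneously $P'(y^*)\beta t\,\rho(\operatorname{diag}(\sharesvector')\network)<1$ and $\sum_i t\sigma_i'\le\sum_i\sigma_i^*$ (possible since $\sum_i\sigma_i'<\sum_i\sigma_i^*$ strictly), the pair $(t\sharesvector',\widehat{\mathbf a}(t\sharesvector'))$ satisfies all constraints of \eqref{problem1} and attains objective value $\widehat Y(t\sharesvector')>y^*=Y(\eqlbactions)$ — contradicting the optimality of $(\sharesvector^*,\eqlbactions)$ for \eqref{problem1}. The one genuinely delicate point is the monotonicity claim: increasing every share raises the direct marginal incentive to exert effort but, through the concave $P'(Y)$ term, simultaneously lowers the marginal return, so the best-response system is not monotone in the naive sense; routing everything through the single decreasing map $\Phi_\sharesvector$ is exactly what lets one conclude that the net effect on equilibrium team performance is an increase.
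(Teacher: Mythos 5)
Your proof is correct and follows essentially the same route as the paper's: both argue by contradiction that a strictly cheaper feasible point for the minimization problem would also be feasible for the original maximization problem with slack in the budget constraint, which is impossible because equilibrium team performance is strictly increasing in the shares. The only difference is that you carefully establish that monotonicity claim via the decreasing scalar fixed-point map (a genuine subtlety given the endogenous $P'(Y)$ term), whereas the paper simply asserts it.
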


\begin{proof}
Suppose not, so that there exists a solution $(\sharesvector^*,\eqlbactions)$ to (\ref{problem1}) that does not solve (\ref{problem2}). Fix a solution $(\sharesvector^\dag,\bm{a}^\dag)$ to (\ref{problem2}). By construction $\sharesvector^*$ is feasible for (\ref{problem2}), so we must have \begin{equation}
        \sum_{i \in N}\sigma_{i}^\dag < \sum_{i \in N}\sigma_{i}^{*}. \label{opt1c}
    \end{equation} Since $(\sharesvector^\dag,\bm{a}^\dag)$ is a solution to (\ref{problem2}), it must satisfy its constraints. Thus,  $\production(\bm{a}^\dag) = y^*$ where $\bm{a}^\dag = [\mathbf{I}-\beta P'(y^{*})\sharesmatrix^\dag\network]^{-1}P'(y^{*})\sharesvector^\dag$, which implies that $\sharesvector^\dag$ satisfies the constraints of (\ref{problem1}). So we have a solution to (\ref{problem1}) for which the constraint $\sum_i \sharesagt{i} \leq \sum_i \sharesagt{i}^*$ is a strict inequality. But this gives a contradiction, since the equilibrium team performance is strictly increasing in the shares $\sigma_i$ given to each agent $i$ and so this constraint must bind.
    We conclude that any solution $(\sharesvector^*,\eqlbactions)$ to (\ref{problem1}) is a solution to (\ref{problem2}).
\end{proof}
In the rest of the proof, we will study (\ref{problem2}). This problem is simpler because we have replaced $P'(Y)$ terms with the constant $P'(y^*)$, and therefore need not worry about the curvature of $P$ in our optimization.

\medskip

The third step will calculate a key derivative for studying this problem. First, it will be useful to make some definitions.  The \emph{endogenous Bonacich matrix} at the optimum is defined to be\footnote{By the proof of \autoref{p:uniqueeq}, given an equity allocation $\sharesvector$, the matrix inverse in this expression is well-defined and the equilibrium action profile can be written as $\eqlbactions = \bonacichendo \sharesvector$.}
$$\bonacichendo := P'(y^*)\left[\mathbf{I} - \beta P'(y^*) \sharesmatrix \network \right]^{-1} $$
and the associated \emph{endogenous Boniacich centralities} of the agents are defined as the column sums of this matrix: 
$$b_i :=  \sum_{j \in N} M_{ji}.$$ These are reminiscent of the familiar Bonacich centralities, but as discussed in  \Cref{sec:uniqueeq}, the effective network depends on the principal's choice of $\sharesvector$ and the marginal effect on success probability at the optimum. Finally, define $$\overline{b}_i = \sum_{j \in N} M_{ji}b_{j}$$ to be the weighted sum of $b_j$, where the weights are the corresponding entries in column $i$ of $\mathbf{M}$. This can be thought of as an average of agents' centralities weighted by a measure of their connectedness to agent $i$.

\medskip

\textit{\textbf{Step 3}: A formula for the derivative of team performance in shares.} The purpose of this step is to establish the following formula for the derivative $\frac{dY}{d\sigma_{i}}$ of team performance in the share allocation
$$\frac{dY}{d\sigma_{i}} = \frac{b_{i}\overline{b}_i}{P'(Y)^{2}}.$$

To prove the formula, we begin by using the formula $\production(\eqlbactions) = \boldsymbol{1}^T \eqlbactions + \frac{\beta}{2}(\eqlbactions)^{T} \network\eqlbactions$:
\begin{align}
    \frac{d \production}{d \sigma_{i}} &= \mathbf{1}^{T}\frac{d \eqlbactions}{d \sigma_{i}} + \beta (\eqlbactions)^{T} \network \frac{d \eqlbactions}{d \sigma_{i}} \nonumber \\
    &= [\mathbf{1} + \beta \network \eqlbactions]^{T} \frac{d \eqlbactions}{d \sigma_{i}}. \label{derivativeproduction}
\end{align}

Now we calculate the derivative of equilibrium actions in share allocation, $\frac{d \eqlbactions}{d \sigma_{i}}$.  Rearranging $\eqlbactions = \bonacichendo \sharesvector$, we can characterize the equilibrium action implicitly as follows:
\begin{align}
    \eqlbactions &= \beta P'(y^{*}) \sharesmatrix \network \eqlbactions + P'(y^{*})\sharesmatrix \mathbf{1}, \label{implicitfn} \\
    \iff\sharesmatrix^{-1} \eqlbactions & = \beta P'(y^*) \network \eqlbactions  + P'(y^*)\mathbf{1}. \label{lemma:A}
\end{align}
The elements of the vector $\sharesmatrix^{-1} \eqlbactions $ are equal to $a_i/\sigma_i$, and thus can be interpreted as the average cost, measured in equity shares, of inducing effort by agent $i$. An important fact for the rest of the proof is that this is equal to the endogenous Bonacich centrality of agent $i$.
\begin{lemma} $\sharesmatrix^{-1} \eqlbactions=\beta P'(y^*) \network \eqlbactions  + P'(y^*)\mathbf{1}=\mathbf{b}$. \label{lem:bang_for_stock}\end{lemma}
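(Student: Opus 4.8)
\textbf{Proof plan for \autoref{lem:bang_for_stock}.}

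The plan is to show that the vector $\mathbf{v} := \sharesmatrix^{-1}\eqlbactions$ equals $\mathbf{b}$, the vector of endogenous Bonacich centralities, i.e.\ the column sums of $\bonacichendo = P'(y^*)[\mathbf{I}-\beta P'(y^*)\sharesmatrix\network]^{-1}$. The first equality in the statement, $\sharesmatrix^{-1}\eqlbactions = \beta P'(y^*)\network\eqlbactions + P'(y^*)\mathbf{1}$, is just (\ref{lemma:A}), obtained by left-multiplying the equilibrium equation (\ref{implicitfn}) by $\sharesmatrix^{-1}$ (this is legitimate since all agents are active after Step 1, so $\sharesmatrix$ is invertible). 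So the real content is the second equality, $\mathbf{v} = \mathbf{b}$.

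For that, I would start from $\eqlbactions = \bonacichendo\sharesvector = \bonacichendo\sharesmatrix\mathbf{1}$, using $\sharesvector = \sharesmatrix\mathbf{1}$. Then $\mathbf{v} = \sharesmatrix^{-1}\bonacichendo\sharesmatrix\mathbf{1}$. The key observation is that $\sharesmatrix^{-1}\bonacichendo\sharesmatrix = P'(y^*)\,\sharesmatrix^{-1}[\mathbf{I}-\beta P'(y^*)\sharesmatrix\network]^{-1}\sharesmatrix = P'(y^*)[\sharesmatrix^{-1}(\mathbf{I}-\beta P'(y^*)\sharesmatrix\network)\sharesmatrix]^{-1} = P'(y^*)[\mathbf{I}-\beta P'(y^*)\network\sharesmatrix]^{-1}$. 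Now $\network\sharesmatrix = (\sharesmatrix\network)^T$ because $\network$ is symmetric and $\sharesmatrix$ is diagonal, hence $[\mathbf{I}-\beta P'(y^*)\network\sharesmatrix]^{-1} = ([\mathbf{I}-\beta P'(y^*)\sharesmatrix\network]^{-1})^T$. Therefore $\mathbf{v} = P'(y^*)([\mathbf{I}-\beta P'(y^*)\sharesmatrix\network]^{-1})^T\mathbf{1} = \bonacichendo^T\mathbf{1}$, and the $i$-th entry of $\bonacichendo^T\mathbf{1}$ is exactly $\sum_j M_{ji} = b_i$. This gives $\mathbf{v} = \mathbf{b}$.

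I expect the only subtlety — hardly an obstacle — is bookkeeping around invertibility and the transpose identity: one must note that $[\mathbf{I}-\beta P'(y^*)\sharesmatrix\network]$ is invertible (guaranteed by the proof of \autoref{p:uniqueeq}, since $\beta P'(y^*)\rho(\sharesmatrix\network)<1$ at an equilibrium, as recorded in the footnote defining $\bonacichendo$), that conjugation by $\sharesmatrix$ commutes with taking inverses, and that transposition commutes with taking inverses. Everything else is a short symmetric-matrix manipulation. An equivalent and perhaps cleaner route avoiding the conjugation trick: expand $\eqlbactions = P'(y^*)\sum_{k\ge 0}(\beta P'(y^*))^k(\sharesmatrix\network)^k\sharesmatrix\mathbf{1}$ via the Neumann series, so that $\sharesmatrix^{-1}\eqlbactions = P'(y^*)\sum_{k\ge0}(\beta P'(y^*))^k(\network\sharesmatrix)^k\mathbf{1} = P'(y^*)\sum_{k\ge0}(\beta P'(y^*))^k((\sharesmatrix\network)^T)^k\mathbf{1} = \bonacichendo^T\mathbf{1} = \mathbf{b}$; I would present whichever is shorter in context.
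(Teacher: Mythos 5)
Your proposal is correct and is essentially the paper's own argument: both rest on substituting the equilibrium characterization, absorbing $\sharesmatrix$ and $\sharesmatrix^{-1}$ into the matrix inverse, and using symmetry of $\network$ to identify the result with $\bonacichendo^{T}\mathbf{1}=\mathbf{b}$. The only difference is presentational—the paper transposes at the outset and works with row vectors, while you conjugate first and transpose at the end—and your Neumann-series variant is the same alternative the paper itself mentions in a footnote.
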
 
\begin{proof}  It will be convenient to work with the transpose:
\begin{align}
    (\sharesmatrix^{-1} \eqlbactions)^T &= (\eqlbactions)^{T}\sharesmatrix^{-1} \nonumber \\
        &= \left([\mathbf{I}-P'(y^*)\beta \sharesmatrix \network ]^{-1}P'(y^*)\sharesmatrix \mathbf{1}\right)^{T}\sharesmatrix^{-1}&& \text{by (\ref{implicitfn})} \nonumber  \\
        & = \mathbf{1}^{T}\sharesmatrix [\mathbf{I}-P'(y^*)\beta  \network \sharesmatrix]^{-1} P'(y^*)\sharesmatrix^{-1}\nonumber  \\
        & = \mathbf{1}^{T} [\mathbf{I}-P'(y^*)\beta \sharesmatrix \network]^{-1}P'(y^*)  && \text{absorbing $\sharesmatrix$, $\sharesmatrix^{-1}$ into the inverse} \label{eq:expression_for_b}  \\
        & =  \mathbf{1}^{T}\bonacichendo \nonumber  && \text{definition of $\bonacichendo$} \\ & = \mathbf{b}^T   \nonumber && \text{definition of $\mathbf{b}$}. && \text{\qedhere}
\end{align}
\end{proof}

Returning to the calculation of $\frac{d \eqlbactions}{d \sigma_{i}}$, we now implicitly differentiate  (\ref{implicitfn}) in $\sigma_i$:
\begin{align}
    \frac{d \eqlbactions}{d \sigma_{i}} &= \beta \frac{d \sharesmatrix}{d \sigma_{i}}\network \eqlbactions P'(y^*) + \beta \sharesmatrix \network \frac{d \eqlbactions}{d \sigma_{i}}P'(y^*) + \frac{d \sharesmatrix}{d \sigma_{i}}\mathbf{1}P'(y^*). \nonumber
\end{align}
Simplifying the above expression, we obtain
\begin{align}
    \left[\mathbf{I}-\beta \sharesmatrix \network P'(y^*)\right] \frac{d \eqlbactions}{d \sigma_{i}} &= P'(y^*) \begin{bmatrix}
                 0  \\
                 1+\beta\left(\network \eqlbactions\right)_{i} \\
                 0
            \end{bmatrix}, \nonumber
\end{align}
where the nonzero entry of the last vector is in row $i$. Premultiplying this expression by $\left[\mathbf{I}-\beta \sharesmatrix \network P'(y^*)\right]^{-1}$, we get 
\begin{align}
    \frac{d \eqlbactions}{d \sigma_{i}} &= \left[\mathbf{I}-\beta \sharesmatrix \network P'(y^*)\right]^{-1}P'(y^*) \begin{bmatrix}
                 0  \\
                 1+ \beta \left(\network \eqlbactions\right)_{i} \\
                 0
            \end{bmatrix} \nonumber \\
            &= \mathbf{M} \begin{bmatrix}
                 0  \\
                 1+ \beta \left(\network \eqlbactions\right)_{i} \\
                 0
            \end{bmatrix} . \label{derivativeeqlbactions}
\end{align} 

Now, substituting (\ref{derivativeeqlbactions}) in (\ref{derivativeproduction}), we deduce
$$
    \frac{d \production}{d \sigma_{i}} =  \left[\mathbf{1} + \beta \network \eqlbactions\right]^{T}\mathbf{M} \begin{bmatrix}
                 0  \\
                 1+ \beta \left(\network \eqlbactions\right)_{i} \\
                 0
            \end{bmatrix}.$$Using the formula of \Cref{lem:bang_for_stock} twice and the definition of $\overline{b}_i$,  we conclude
$$\frac{dY}{d\sigma_{i}} = \frac{b_{i}\overline{b}_i}{P'(Y)^{2}}.$$

\medskip

\textit{\textbf{Step 4}: Deduce that the endogenous Bonacich centralities $b_i$ are constant across agents.} We know $(\optsharesvector,\mathbf{a}^*)$ is a solution to (\ref{problem2}).  We can combine the two constraints $Y(\mathbf{a}) = y^*$ and $\mathbf{a} = [\mathbf{I}-P'(y^*)\beta \sharesmatrix \network]^{-1} P'(y^*) \sharesvector$ into a single constraint, \begin{equation} Y([\mathbf{I}-P'(y^*)\beta \sharesmatrix \network]^{-1} P'(y^*) \sharesvector) = y^*,\label{eq:lumped_constraint}\end{equation} noting that  $P'(y^*)$ is now a constant, so the argument of the function $Y$ here depends only on $\bm{\sigma}$. The Lagrangian first-order conditions for the share-minimization problem then  read $$ \frac{d \sum_i \sigma_i}{d \sigma_i} - \lambda \frac{d Y}{d \sigma_i}= 0, $$ where $\lambda$ is the multiplier on the constraint (\ref{eq:lumped_constraint}) we have just introduced. (Recall we have dropped any inactive agents, so the nonnegativity constraints do not bind.) Since $\frac{d \sum_i \sigma_i}{d \sigma_i} =1$ for all agents, optimality requires that $\frac{dY}{d\sigma_i}$ is equal to some constant for all agents $i$. This is intuitive: it should not be possible to increase output by transferring shares from one agent to another.  By the formula obtained in Step 3, this requires that
$$b_i \overline{b}_i \text{ is constant across agents $i$} .$$ 

The idea of the rest of this step is to show that this is possible only if all the $b_i$ are identical. The intuition is that both $b_i$ and $\overline{b}_i$ are centrality measures closely related to Bonacich centrality.  If the $b_i$ are not all identical, then the agent with maximum $b_i$ must have minimum $\overline{b}_i$, a strange situation. To show formally this is impossible, suppose that there exist two agents $i^{*} \in N$ with $i^{*} = \argmin_{k \in N}b_{k}$ and $j^{*} \in N$ with $j^{*} = \argmax_{k \in N} b_{k}$ such that $b_{i^{*}} < b_{j^{*}}$.\footnote{We are grateful to Michael Ostrovsky for suggesting the argument in the next paragraph.} 

Then we have that, for agent $i^{*}$, $$b_{i^{*}}\overline{b}_{i^*} < b_{i^{*}}b_{j^{*}}\sum_{j \in N}M_{ji^{*}} = (b_{i^{*}})^{2}b_{j^{*}},$$ using the maximality of $b_{j^*}$ among the $b_j$ and  the definitions of $\overline{b}_i$ and $b_i$. But we similarly have that, for agent $j^{*}$, $$b_{j^{*}}\overline{b}_{j^*} > b_{j^{*}}b_{i^{*}}\sum_{j \in N}M_{jj^{*}} = b_{i^{*}}(b_{j^{*}})^{2}.$$ 
 We showed above that $b_{i^{*}}\overline{b}_{i^*}=b_{j^{*}}\overline{b}_{j^*}$ for any two agents $i^*$ and $j^*$, and so combining the last two inequalities implies
 $$(b_{i^*})^2b_{j^*} > b_{i^*}(b_{j^*})^2.$$
This contradicts our assumption $b_{j^*}>b_{i^*}$, so we must have $b_i$ equal to some constant $c_1$ for all $i$ in $N$.

\medskip

\textit{\textbf{Step 5}: Conclude the neighborhood equity $(\network \sharesvector)_i$ is constant across (active) agents.} By  (\ref{eq:expression_for_b}) and Step 2, we have
$$ (\mathbf{1}^{T} [\mathbf{I}-P'(y^*)\beta \sharesmatrix \network]^{-1}P'(y^*))_i = c_1$$
for all $i$. Therefore,
$$P'(y^*) = c_1-P'(y^*)\beta  c_1 (\network\sharesvector)_i $$
for all  $i$, so there exists a constant $c$ such that $(\network\sharesvector)_i=c$ for all $i$ (among the subnetwork of active agents).

\bigskip

\textbf{Part (b)}: We now compute the equilibrium action vector under the optimal allocation. We calculate
\begin{align*}
        \eqlbactions &= \left[\mathbf{I}-P'(Y^*)\beta \optsharesmatrix \network \right]^{-1} P'(Y^*)\optsharesvector, \\
        &= \sum_{k=0}^{\infty}P'(Y^*)^k\beta^{k}\left(\optsharesmatrix \network\right)^{k}P'(Y^*) \optsharesvector, \\
        &= \sum_{k=0}^{\infty}P'(Y^*)^k\beta^{k}c^{k}\optsharesvector \\
        &= \frac{P'(Y^*)\optsharesvector}{1-P'(Y^*)\beta c},
\end{align*}
where the first equality follows from writing out the expansion of the inverse. The third equality follows from the characterization of the optimal equity shares above for entries corresponding to active agents, and holds for inactive agents because both sides are zero. Convergence of the geometric series follows from the proof of \autoref{p:uniqueeq}.

\bigskip

\textbf{Part (c)}: For all active agents $i$, we have
$$(\network \eqlbactions )_i = \frac{P'(Y^*)(\network \optsharesvector)_i}{1-P'(Y^*)\beta c} = \frac{P'(Y^*)c}{1-P'(Y^*)\beta c}$$
where the first equality applies part (b) and the second equality applies part (a).
\end{proof}

We next state and prove a lemma, discussed in \Cref{sec:active}, which will be used in subsequent proofs.

\begin{lemma}
\label{l:activesetopt}
An allocation $\sharesvector$ is optimal among those with a given sum of shares $\sum_{i\in N} \sigma_i = s$ if and only if it solves \begin{equation*}  \begin{aligned}
    & \max_{\sharesvector} && c \\
    & \text{subject to} && (\network \boldsymbol{\sigma})_i = c \text{ whenever }\sigma_i > 0.
  \end{aligned}
\end{equation*}
In particular, the active sets at optimal allocations are the same as the sets of non-negative indices under solutions to this optimization problem.
\end{lemma}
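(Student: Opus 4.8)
\textbf{Proof plan for \autoref{l:activesetopt}.}

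The plan is to leverage the expression \eqref{eq:eqlbtotalproduction}, which rewrites equilibrium team performance, when the balanced equity condition $(\network\sharesvector)_i = c$ holds on the active set, as $Y(\eqlbactions) = s \cdot F(c)$ where $s = \sum_i \sigma_i$ is held fixed and $F(c) = \frac{P'(Y^*)}{1-\beta P'(Y^*)c} + \frac{\beta P'(Y^*)^2 c}{2(1-\beta P'(Y^*)c)^2}$ is a strictly increasing function of $c$ (for $c$ in the admissible range where the spectral radius condition from \autoref{p:uniqueeq} holds). Given this, maximizing $Y$ over allocations with a fixed sum $s$ is equivalent to maximizing $c$ — but there is a subtlety: the expression \eqref{eq:eqlbtotalproduction} already presumes the balanced equity condition, and the reduction needs the converse direction too, namely that any genuinely optimal allocation must satisfy balanced equity. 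That direction is exactly \autoref{t:optsharescharacterization}(a) together with the strengthened statement in the Remark following it (optimality of $\sharesvector$ subject to $\sum_i\sigma_i = s$ forces the balance condition). So the first step is to invoke \autoref{t:optsharescharacterization}(a) to conclude that every optimal allocation with $\sum_i \sigma_i = s$ is feasible for the displayed optimization problem.

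The second step is to show the two problems have the same optimal value and the same solution set. For ``optimal $\Rightarrow$ solves the program'': take $\sharesvector$ optimal with sum $s$; by Step 1 it satisfies $(\network\sharesvector)_i = c$ on its active set for some constant $c$, and by \eqref{eq:eqlbtotalproduction} its induced performance is $sF(c)$. If some feasible $\sharesvector'$ for the program (also summing to $s$, also satisfying balanced equity with constant $c' > c$) existed, then — one must check $Y^*$ is pinned down appropriately, i.e. that the fixed-point value of $Y^*$ in \eqref{eq:eqlbtotalproduction} is itself monotone in $c$ so that $F$ composed with the equilibrium $Y^*$ remains increasing — $\sharesvector'$ would induce strictly higher performance, contradicting optimality of $\sharesvector$. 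For the reverse direction ``solves the program $\Rightarrow$ optimal'': if $\sharesvector$ maximizes $c$ subject to balanced equity and sum $s$, then by \eqref{eq:eqlbtotalproduction} it achieves performance $sF(c^{\max})$; any optimal allocation achieves $sF(c^{\mathrm{opt}})$ with $c^{\mathrm{opt}} \le c^{\max}$ by maximality, and also $c^{\mathrm{opt}} \ge c^{\max}$ since the optimal allocation's value weakly exceeds that of $\sharesvector$; hence equality and $\sharesvector$ is optimal. The final sentence about active sets is then immediate: the active set of $\sharesvector$ is by definition its support, and the program is stated directly in terms of the support.

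The main obstacle I anticipate is handling the endogeneity of $Y^*$ cleanly. The function $F(c)$ as written contains $P'(Y^*)$, and $Y^*$ itself depends on the whole allocation, not just on $c$ and $s$ — so \eqref{eq:eqlbtotalproduction} is really a fixed-point relation $Y^* = sF(c; Y^*)$ rather than an explicit formula. I would resolve this by arguing, as in the proof of \autoref{p:uniqueeq}, that for fixed $s$ and fixed $c$ this fixed-point equation has a unique solution $Y^*(c,s)$, and that $Y^*(c,s)$ is strictly increasing in $c$ (since the right-hand side is increasing in $c$ and the usual monotone-comparative-statics/implicit-function argument applies, using concavity of $P$). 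Once $Y^*$ is a well-defined increasing function of $c$ at fixed $s$, the equivalence ``maximize $Y$'' $\Leftrightarrow$ ``maximize $c$'' is immediate and the rest of the argument is bookkeeping. A secondary point to be careful about is that the admissible range of $c$ is constrained by the requirement $\beta P'(Y^*)\rho(\sharesmatrix\network) < 1$ from \autoref{p:uniqueeq}; one should note that along the relevant family of allocations this constraint is automatically respected, so the maximization over $c$ is over a genuine (nonempty) feasible set.
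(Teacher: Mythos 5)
Your proposal is correct and follows essentially the same route as the paper's proof: invoke \autoref{t:optsharescharacterization}(a) to ensure optimal allocations satisfy balanced neighborhood equity, rewrite equilibrium team performance via \eqref{eq:eqlbtotalproduction} as a fixed-point relation $Y^*=f(Y^*,c)$, and conclude $Y^*(c)$ is increasing in $c$ because $f$ is decreasing in $y$ (concavity of $P$) and increasing in $c$. The additional care you take with the two directions of the equivalence and the admissible range of $c$ is sound bookkeeping on top of the same core argument.
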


\begin{proof}
   By \autoref{t:optsharescharacterization}, there exists a constant $c$ such that for all agents that get a strictly positive equity at the optimal solution, $(G\boldsymbol{\sigma}^{*})_{i}=c$. At any solution which satisfies the balanced equity condition and allocates a fraction $s$ of shares to agents, the team performance  $$Y^{*} = \mathbf{1}^{T}\eqlbactions + \frac{\beta}{2}(\eqlbactions)^{T}\mathbf{G}\eqlbactions$$ can be rewritten as 
   
\begin{equation}\label{eq:team_performance_expanded}
       Y^* = \left(\frac{P'(Y^{*})}{1-\beta P'(Y^{*}) c} + \frac{\beta P'(Y^{*})^{2} c}{2(1-\beta P'(Y^{*}) c)^{2}}\right)s.
\end{equation}
   
  We will conclude from the above expression that team performance is increasing in $c$. For a given $c$, the team performance $Y^{*}(c)$ is the solution to $f(y,c)=y$, where we define $$f(y,c) :=  \left(\frac{P'(y)}{1-\beta P'(y) c} + \frac{\beta P'(y)^{2} c}{2(1-\beta P'(y) c)^{2}}\right)s.$$ By assumption $P(\cdot)$ is concave and twice differentiable, so $f(y,c)$ is decreasing in $y$. Since we have also assumed $P(\cdot)$ is strictly increasing, we have $\frac{\partial f}{\partial c} > 0$ for all $y$ and thus $Y^{*}(c)$ is increasing in $c$.
\end{proof}

\begin{proof}[Proof of \autoref{p:invariance}]
By \autoref{l:activesetopt}, the optimal active set is determined by an optimization problem that is independent of $\beta$. So if the active set for the allocation $\sharesvector_0^*$ is optimal for complementarity parameter $\beta_0$, the same active set is optimal for complementarity parameter $\beta_1$. Call this active set $\subnetwork$.

By \autoref{t:optsharescharacterization}, the optimal allocation with this active set is characterized by $\optsharesvector_0 = \subnetwork^{-1} \mathbf{1} c_0$ for some constant $c_0>0$ when the complementarity parameter is $\beta_0$. Similarly, the allocation with this active set is characterized by $\optsharesvector_1 = \subnetwork^{-1} \mathbf{1} c_1$ for some constant $c_1>0$ when the complementarity parameter is $\beta_0$. This implies the claim with $k = c_1/c_0$.
\end{proof}

\begin{proof}[Proof of \autoref{p:diameter}]
We consider the success probability objective as the argument is essentially the same for both objectives. By \autoref{l:activesetopt}, any optimal allocation maximizes $(G\sharesvector)_i$ for active agents $i$ among allocations $\sharesvector$ satisfying the balanced equity condition.

Let $\overline{g}=\max_{i,j} G_{ij}$ and choose $i$ and $j$ such that the link between $i$ and $j$ obtains this maximum weight. Setting $\sigma_i=\sigma_j=\frac12$ and all other $\sigma_k=0$ gives $(G\sharesvector)_i=(G\sharesvector)_j = \overline{g}/2$. We now show this value cannot be obtained with an active set with diameter greater than $2$.

Suppose there is an optimal allocation $\sharesvector^*$ with an active set $A^*$ with diameter greater than $2$. Choose active agents $i$ and $j$ such that the distance between $i$ and $j$ is at least $2$. The subsets $\{i\},\{j\},N(i) \cap A^*,$ and $N(j) \cap A^*$ of the active set are all disjoint.

The balanced equity condition implies that $(G\sharesvector^*)_i=(G\sharesvector^*)_j =c$ for some constant $c$, and we have
\begin{align*}2c&=(G\sharesvector^*)_i + (G\sharesvector^*)_j\\& \leq \overline{g}  \sum_{k \in N(i) \cup N(j)} \sigma^*_k \\ &<  \overline{g} ,\end{align*}
where the last inequality holds because $\sigma^*_i>0$ so $\sum_{k \in N(i) \cup N(j)} \sigma^*_k <1$. Since we showed we can obtain a value of $c=\overline{g}/2$, this contradicts the optimality of $\optsharesvector$.
\end{proof}

\begin{proof}[Proof of \autoref{t:unweighted_active_set}]
   We consider the success probability objective as the argument is essentially the same for both objectives.  By \autoref{l:activesetopt}, any optimal allocation maximizes the constant $c=(G\sharesvector)_i$ for active agents $i$ among allocations $\sharesvector$ satisfying the balanced equity condition.
   
   Let the size of the maximum clique in the network be $\overline{k}$. \autoref{t:optsharescharacterization} implies that the optimal allocation with active set a clique of size $k$ gives all active agents equal shares $\frac{1}{\overline{k}}$. Under this allocation, the balanced equity condition holds with constant $c=\frac{\overline{k}-1}{\overline{k}}$.
    
    Suppose an allocation $\sharesvector$ satisfies the balanced neighborhood equity condition with constant $c > \frac{\overline{k}-1}{\overline{k}}$. We will show the active set under this allocation must contain a clique of size at least $\overline{k}+1$, which contradicts our assumption that the size of the maximum clique is $\overline{k}$. 
    
    Define $$k^{*} = \argmax_{k \in \mathbb{Z}}\left\{c - \left(\frac{k-1}{k}\right) > 0 \right\}.$$
    We  will show that the active set under $\sharesvector$ contains a clique of size $k^{*}+1 > \overline{k}$. Call the set of vertices of this active set by $A^*$. First observe that the equity that each agent gets is at most $(1-c)$. This is because each agent's neighbors receive equity shares summing to $c$ and the total of all equity shares is $1$.

    We will define a sequence of agents $i^0,\hdots,i^{k^*}$ inductively such that $i^0,i^1,\hdots,i^k$ is a clique for all $k$. Fix some $i^0$ in the active set and define  $N_{S}(i^{0}) := N(i^{0})$. Given $i^0,i^1,\hdots,i^k$ for any $k< k^*$, we define $$N_{S}(i^{k}) :=\bigcap_{l=0}^{k}N(i^{l}).$$
    
    Given $i^0,\hdots,i^k$ with $0 \leq k < k^*$, we want to choose $i^{k+1}$ to be an arbitrary agent in $N_S(i^k)$. To do so, we must show $N_S(i^k)$ is non-empty.
    
    We will prove that the total equity in $N_{S}(i^{k})$ is at least $(k+1)c - k$, i.e., $$\sum_{i \in N_{S}(i^{k})}\sigma_i \geq (k+1)c - k.$$ We show this by induction on $k$. The base case $k=0$ holds by the balanced neighborhood equity condition.

    The inductive hypothesis is 
$$\sum_{i \in N_{S}(i^{k-1})}\sigma_i \geq kc - (k-1).$$
This implies \begin{equation}\label{e:inductionmodified}\sum_{i \in N(i^{0}) \setminus N_{S}(i^{k-1})}\sigma_i \leq (k-1)(1-c)\end{equation}
since $\sum_{i \in N(i^{0})}\sigma_i = c$.

Since $i^{k}$ is active, we also have $\sum_{i \in N(i^{k})}\sigma_i = c$. We can decompose the equity in this neighborhood, potentially along with  additional agents' shares, as
    $$
        \sum_{i \in A^*\setminus N(i^{0})}\sigma_i + \sum_{i \in N(i^{0}) \setminus N_{S}(i^{k-1})}\sigma_i + \sum_{i \in N_{S}(i^{k})}\sigma_i \geq c.$$
    By the balanced neighborhood equity condition for agent $i^0$ and (\ref{e:inductionmodified}), this implies
        $$ \sum_{i \in N_{S}(i^{k})}\sigma_i \geq (k+1)c - k,$$
completing the induction. Since $c > \frac{k^{*}-1}{k^{*}}$, this implies that $N_{S}(i^{k})$ is non-empty for each $k \in \{1,\dots,k^{*}-1\}$. So we can construct $i^0,\hdots,i^{k^*}$ as described above.

By construction, the subnetwork  $\{i^0,\hdots,i^{k^*}\}$ is a clique of size $k^*+1$. Since we have assumed the maximum clique has size $\overline{k}$, this contradicts the existence of an allocation $\sharesvector$ satisfying the balanced neighborhood equity condition with constant $c > \frac{\overline{k}-1}{\overline{k}}$. Thus the maximum clique must be an optimal allocation.
\end{proof}

\begin{proof}[Proof of \autoref{p:compstaticmatrixcalc}]
    \autoref{t:optsharescharacterization} tells us that, for all agents such that $\optsharesagt{i} > 0$, we have
$$\optsharesvector = c\subnetwork^{-1} \textbf{1}.$$
We will use the matrix calculus expression 
$$\frac{\partial \network(t)^{-1}}{\partial t} = - \network(t)^{-1} \frac{\partial \network(t)}{\partial t} \network(t)^{-1}.$$ Taking the derivative with respect to $G_{jk}$, we have that

$$\frac{\partial \optsharesvector}{\partial G_{jk}} =  - c \subnetwork^{-1} \frac{\partial \subnetwork}{\partial G_{jk}} \subnetwork^{-1} \mathbf{1}+\frac{\partial c}{\partial G_{jk}} \subnetwork^{-1} \textbf{1}.$$
Analyzing the $i^{th}$ element in this vector gives

$$\frac{\partial \sharesagt{i}^*}{\partial G_{jk}} = -c(\subnetwork^{-1}\mathbf{1})_j (\subnetwork^{-1})_{ik} -c(\subnetwork^{-1}\mathbf{1})_k (\subnetwork^{-1})_{ij}+ \frac{\partial c}{\partial G_{jk}}\cdot (\subnetwork^{-1}\mathbf{1})_i. \eqno  $$
The result follows from $\sigma_i^* = c(\subnetwork^{-1}\mathbf{1})_i$ and the analogous expressions with indices $j$ and $k$.\end{proof}

\begin{proof}[Proof of \autoref{p:compstaticoveralloutput}]
    We want to calculate the derivative of the team performance $Y^*$ under the optimal allocation as $G_{ij}$ increases. By the envelope theorem, we can calculate this derivative by holding fixed the allocation $\optsharesvector$. To do so, we calculate the derivative of the equilibrium team performance $Y^*$ for a given allocation $\boldsymbol{\sigma}$ as $G_{ij}$ increases. We will then substitute $\boldsymbol{\sigma}=\optsharesvector$.
    
    Letting $\mathbf{a}^{*}$ be the equilibrium action profile under the allocation $\boldsymbol{\sigma}$, we calculate
    \begin{equation*}
        \begin{split}
            \frac{\partial Y}{\partial G_{ij}} &= \frac{\partial \mathbf{1}^{T}\mathbf{a^{*}} + \frac{\beta}{2}(\mathbf{a}^{*})^{T}\mathbf{G}\mathbf{a}^{*}}{\partial G_{ij}}, \\
            & = \mathbf{1}^{T}\frac{\partial \mathbf{a^{*}}}{\partial G_{ij}} + \beta (\mathbf{a}^{*})^{T}\mathbf{G}\frac{\partial \mathbf{a^{*}}}{\partial G_{ij}} + \frac{\beta}{2}(\mathbf{a}^{*})^{T} \frac{\partial \mathbf{G}}{\partial G_{ij}}\mathbf{a}^{*}, \\
            & = \left[\mathbf{1}^{T} + \beta (\mathbf{a}^{*})^{T}\mathbf{G}\right]\frac{\partial \mathbf{a^{*}}}{\partial G_{ij}} + \beta a_{i}^{*}a_{j}^{*}.
        \end{split}
    \end{equation*}
    The equilibrium action satisfies $\mathbf{a}^{*} = \beta P'(Y)\boldsymbol{\Sigma} \mathbf{G} \mathbf{a}^{*} + P'(Y)\boldsymbol{\Sigma}\mathbf{1}$. Thus, we can write
    \begin{equation*}
        \begin{split}
            \frac{\partial \mathbf{a^{*}}}{\partial G_{ij}} &= \beta P'(Y)\boldsymbol{\Sigma}\frac{\partial \mathbf{G}}{\partial G_{ij}}\mathbf{a}^{*} + \beta P'(Y) \sharesmatrix \network \frac{\partial \mathbf{a^{*}}}{\partial G_{ij}} + \left(\beta \boldsymbol{\Sigma} \mathbf{G} \mathbf{a^{*}} + \boldsymbol{\Sigma}\mathbf{1}\right)\frac{\partial P'(Y)}{\partial G_{ij}}, \\
            & = \beta \begin{bmatrix}
                 0  \\
                 \sigma_{i}a_{j}^{*} \\
                 0 \\
                 \sigma_{j}a_{i}^{*} \\
                 0
            \end{bmatrix}P'(Y) + \beta P'(Y) \sharesmatrix \network \frac{\partial \mathbf{a^{*}}}{\partial G_{ij}} + \left(\beta \boldsymbol{\Sigma} \mathbf{G} \mathbf{a^{*}} + \boldsymbol{\Sigma}\mathbf{1}\right)P''(Y)\frac{\partial Y}{\partial G_{ij}}.
        \end{split}
    \end{equation*}
    where $\boldsymbol{\Sigma}\frac{\partial \mathbf{G}}{\partial G_{ij}}\mathbf{a}^{*}$ is a vector with the $i^{th}$ element equal to $\sigma_{i}a_{j}^{*}$, the $j^{th}$ element equal to $\sigma_{j}a_{i}^{*}$ and the rest of the elements equal to zero. Solving for $\frac{\partial \mathbf{a^{*}}}{\partial G_{ij}} $ gives $$ \frac{\partial \mathbf{a^{*}}}{\partial G_{ij}} = \left[\mathbf{I}-\beta P'(Y)\boldsymbol{\Sigma}\mathbf{G}\right]^{-1}\left(\beta P'(Y) \begin{bmatrix}
                 0  \\
                 \sigma_{i}a_{j}^{*} \\
                 0 \\
                 \sigma_{j}a_{i}^{*} \\
                 0
            \end{bmatrix} + \boldsymbol{\Sigma}\left[\mathbf{1} + \beta \mathbf{G} \mathbf{a}^{*}\right]P''(Y)\frac{\partial Y}{\partial G_{ij}}\right).$$
Substituting into the expression for $\frac{\partial Y}{\partial G_{ij}}$ gives
\begin{multline*}
    \frac{\partial Y}{\partial G_{ij}} \left[1-\left(\mathbf{1}+\beta  \mathbf{G} \mathbf{a}^{*}\right)^{T}\left[\mathbf{I}-\beta P'(Y)\boldsymbol{\Sigma} \mathbf{G}\right]^{-1}\boldsymbol{\Sigma} \left(\mathbf{1}+\beta  \mathbf{G} \mathbf{a}^{*}\right)P''(Y)\right]  \\ = \beta P'(Y)\left[\mathbf{1}^{T} + \beta (\mathbf{a}^{*})^{T}\mathbf{G}\right]\left[I - \beta P'(Y) \sharesmatrix \network \right]^{-1}\begin{bmatrix}
                 0  \\
                 \sigma_{i}a_{j}^{*} \\
                 0 \\
                 \sigma_{j}a_{i}^{*} \\
                 0
            \end{bmatrix} + \beta a_{i}^{*}a_{j}^{*}.
\end{multline*}

We now use the optimality of $\boldsymbol{\sigma}$, which implies the equality $\mathbf{a}^{*} = \optsharesvector\frac{P'(Y)}{1-\beta c P'(Y) }$ by \autoref{t:optsharescharacterization}. Applying this, we obtain
$$\frac{\partial Y^*}{\partial G_{ij}} = \beta \sigma_{i}^{*}\sigma_{j}^{*}P'(Y^*)^{2}\frac{\left(\frac{2}{(1-\beta c P'(Y^*))^{3}} + \frac{1}{(1-\beta c P'(Y^*))^{2}}\right)}{1-\frac{P''(Y^*) \sum_{i}\sigma^{*}_{i}}{(1-\beta c P'(Y^*))^{3}}}.$$
The right-hand side has the desired form.
\end{proof}

\begin{proof}[Proof of \autoref{p:compstatictotalequity}]

  We can assume without loss of generality that all agents in $\network$ are active under $\optsharesvector$ (by dropping any inactive agents from the network). Consider a feasible allocation $\sharesvector$ satisfying the balanced neighborhood equity condition $\mathbf{G}\boldsymbol{\sigma} = c\mathbf{1}$ and let $s=\sum_{i}\sigma_{i} \in [0,1]$ be the sum of shares under this allocation. Such an allocation will exist for any $s \in [0,1]$, as $\mathbf{G}$ is the optimal active set and thus $\left(\mathbf{G}^{-1}\mathbf{1}\right)_{i} > 0$ for all $i$ in $\network$. The balanced neighborhood equity condition implies that $$c = \frac{s}{\mathbf{1}^{T}\mathbf{G}^{-1}\mathbf{1}}.$$
   
   Applying (\ref{eq:team_performance_expanded}), we can write the residual profit for the principal under this allocation as
   $$V(s,\beta) = \alpha^{2} s(1-s)\left(\frac{1}{1-\beta \alpha \frac{s}{\mathbf{1}^{T}\mathbf{G}^{-1}\mathbf{1}}} + \frac{\beta \alpha \frac{s}{\mathbf{1}^{T}\mathbf{G}^{-1}\mathbf{1}}}{2\left(1-\beta \alpha \frac{s}{\mathbf{1}^{T}\mathbf{G}^{-1}\mathbf{1}}\right)^{2}}\right).$$ So for a fixed $\beta$ for which $\optsharesvector$ is an optimal allocation, the sum of shares under this allocation solves the optimization problem $$V^{*}(\beta) = \max_{s \in [0,1]}s(1-s)\left(\frac{1}{1-\beta \alpha \frac{s}{\mathbf{1}^{T}\mathbf{G}^{-1}\mathbf{1}}} + \frac{\beta \alpha \frac{s}{\mathbf{1}^{T}\mathbf{G}^{-1}\mathbf{1}}}{2\left(1-\beta \alpha \frac{s}{\mathbf{1}^{T}\mathbf{G}^{-1}\mathbf{1}}\right)^{2}}\right).$$
   
   We will characterize the solution to this optimization problem. We define $k^{*}:=\mathbf{1}^{T}\mathbf{G}^{-1}\mathbf{1}$ and claim that  $\beta \alpha < k^{*}$. We must have $\beta \in \left(0,\frac{1}{\alpha}\frac{1}{\rho(\mathbf{\Sigma G})}\right)$ by our assumption that equilibrium team performance is in $[0,\overline{Y}]$. Observe that $c = s/k^*$ is an eigenvalue for the matrix $\mathbf{\Sigma G}$ for any $s \in [0,1]$, with right eigenvector $\sharesvector$. Thus we have $$\frac{s}{k^{*}} \leq \rho(\mathbf{\Sigma G}) < \frac{1}{\beta \alpha}.$$ Choosing $s=1$ then verifies the claim $\beta \alpha < k^{*}$. 
   
   We now return to the problem of maximizing $V(s,\beta)$. Taking the partial derivative with respect to $s$, we find $$\frac{\partial V(s,\beta)}{\partial s} = \frac{k^{*}\alpha^{2}\left(-(\beta \alpha)^2 s^{3} + 3 \beta \alpha k^{*} s^2 - 4(k^{*})^{2}s + 2(k^{*})^{2}\right)}{2\left(k^{*}-\beta \alpha s\right)^{3}}.$$
   It suffices to study the behavior of $V(s,\beta)$ when $s\in [0,1]$. Define $$p(s,\beta):=-(\beta \alpha)^2 s^{3} + 3 \beta \alpha  k^{*} s^2 - 4(k^{*})^{2}s + 2(k^{*})^{2}$$ to be the numerator of $V(s,\beta)$. The partial derivative of $p(s,\beta)$ with respect to $s$ is $$\frac{\partial p(s,\beta)}{\partial s} = -3(\beta \alpha)^{2}s^{2} + 6 \beta \alpha k^{*} s - 4(k^{*})^{2} < -3(\beta \alpha s + k^*)^2.$$
   Since the right-hand expression is strictly negative, the function $p(s,\beta)$ is strictly decreasing in $s \in [0,1]$. Thus $p(\cdot,\beta)$ has only one real root for each $\beta$.
   
   We claim that this root lies in $\left(\frac{1}{2},1\right)$. At $s=\frac{1}{2}$, we have $$p\left(\frac{1}{2},\beta\right) = \left(-(\beta \alpha)^2 \cdot \frac{1}{8} + 3\alpha \beta k^{*} \cdot \frac{1}{4}\right) > 0,$$ for any $\beta \alpha < k^{*}$. At $s=1$, we have $$p(1,\beta) = (k^{*}-\beta \alpha)(\beta \alpha - 2k^{*}) < 0,$$ for any $\beta \alpha < k^{*}$. This proves the claim.
    
   For $s \in [0,1]$, the denominator of $V(s,\beta)$ is strictly positive for any $\beta \alpha < k^{*}$. So for each $\beta$, the sum of shares $s$ at the optimal allocation is characterized by $p(s,\beta)=0$. We calculate$$\frac{\partial p(s,\beta)}{\partial \beta} = 3\alpha k^{*}s^{2} - 2\beta \alpha^{2}s^{3} = \alpha s^{2}\left(3k^{*}-2\beta \alpha s\right) > 0,$$ where the inequality holds for all $s \in (0,1)$. Since $p(s,\beta)$ is strictly decreasing in $s$ for each $\beta$, the sum of shares $s$ at the optimal allocation is increasing in $\beta$.
\end{proof}

\section{Three-agent networks}
\label{a:three_agent}

In this section, we describe the optimal equity allocation in a three-agent weighted network (without self-links). We begin by describing the optimal allocation and then discuss how this allocation changes as we vary the network.

We first describe which agents receive the most equity and which agents are active as a function of the network structure. We then derive comparative statics of the optimal equity shares as the network structure varies. We will see that the equity given to an agent need not be monotonic with respect to the weights on his links.

Without loss of generality, we can assume the adjacency matrix is
$$\network = 
\begin{bmatrix}0 & 1 & G_{13} \\
1 & 0 & G_{23} \\
G_{13} & G_{23} & 0
\end{bmatrix},$$

\noindent
with the ordering $1 \geq G_{13} \geq G_{23} \geq 0$.\footnote{To reduce to this case, we can reorder agents and then normalize the link $G_{12}$ to $1$.} We will consider the case when all three links have positive weight, i.e., $G_{23}>0$.

Our first result ranks agents' equity allocations and describes the active set:
\begin{proposition}
\label{p:threeagentoptshareordering}
For any $\beta > 0$, there is a unique optimal allocation. The set of active agents in an optimal allocation is $\{1,2,3\}$ if and only if
$$G_{13} + G_{23} > 1.$$
The shares at the optimal allocation are ranked $$\optsharesagt{1} \geq \optsharesagt{2} \geq \optsharesagt{3}.$$
\end{proposition}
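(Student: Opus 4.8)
The plan is to invoke \autoref{l:activesetopt}: for any fixed total $s=\sum_i\sigma_i>0$, an optimal allocation must maximize the balanced-equity constant $c$ over nonnegative $\boldsymbol\sigma$ with $(\network\boldsymbol\sigma)_i=c$ on its active set, after which it only remains to optimize over $s$. With three agents and all three links strictly positive (here $1\ge G_{13}\ge G_{23}>0$), the candidate active sets are the three singletons, the three pairs, and $\{1,2,3\}$. A singleton yields $c=0$ (no self-links), hence is dominated. For a pair $\{i,j\}$, balanced equity reads $G_{ij}\sigma_j=G_{ij}\sigma_i=c$, forcing $\sigma_i=\sigma_j=s/2$ and $c=\tfrac12 G_{ij}s$, so the heaviest link wins: $\{1,2\}$ gives $c=s/2$. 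For the full set, $\network$ is invertible ($\det\network=2G_{13}G_{23}>0$), so balanced equity forces $\boldsymbol\sigma=c\,\network^{-1}\mathbf 1$, which is exactly the explicit share vector recorded in the three-agent example of \autoref{sec:active}; its first two entries are automatically positive, and its third is proportional to $G_{13}+G_{23}-1$, hence positive precisely when $G_{13}+G_{23}>1$, in which case $c=\dfrac{2G_{13}G_{23}}{2(G_{13}+G_{23})-1-(G_{13}-G_{23})^2}\,s$ (the denominator being positive since it is at least $2(G_{13}+G_{23}-1)$).

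I would then compare the $c$-values. The full-set $c$ exceeds the best pair value $s/2$ iff $\dfrac{2G_{13}G_{23}}{2(G_{13}+G_{23})-1-(G_{13}-G_{23})^2}>\tfrac12$, and clearing the positive denominator collapses this to $(G_{13}+G_{23}-1)^2>0$, i.e., $G_{13}+G_{23}\ne1$. So when $G_{13}+G_{23}>1$ the full set is both feasible and a strict improvement on every pair and singleton, giving $\{1,2,3\}$ as the unique optimal active set; when $G_{13}+G_{23}\le1$ the full-set allocation is infeasible (and degenerates to $\{1,2\}$ at equality), while among pairs $\{1,2\}$ strictly dominates $\{1,3\}$ and $\{2,3\}$ because $G_{23}\le G_{13}<1$ in that regime (using $G_{23}>0$), so $\{1,2\}$ is the unique optimal active set. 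This proves the ``iff'' for the active set. The remaining uniqueness claim follows since, on the selected active set, the share ratios are pinned down by \autoref{t:optsharescharacterization} (invertibility of the restricted adjacency matrix), and the common scale is pinned down by the objective: $s=1$ under the success-probability objective, and under residual profits by the single-peakedness in $s$ of $(1-s)P(Y^*(s))$, where $Y^*(s)$ is the increasing function of $s$ determined by \eqref{eq:eqlbtotalproduction} once the ratio $c/s$ is fixed, exactly as in the proof of \autoref{p:compstatictotalequity}.

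For the ordering $\sigma_1^*\ge\sigma_2^*\ge\sigma_3^*$: in the $\{1,2\}$ case, $\sigma_1^*=\sigma_2^*=s/2\ge0=\sigma_3^*$. In the $\{1,2,3\}$ case, cancelling the common positive factor $c/(2G_{13}G_{23})$, a one-line computation gives $\sigma_1^*-\sigma_2^*\propto(G_{13}-G_{23})(G_{13}+G_{23}-1)$ and $\sigma_2^*-\sigma_3^*\propto(1-G_{13})(1+G_{13}-G_{23})$; under our standing assumptions ($G_{13}\ge G_{23}$, $G_{13}\le1$, $G_{13}+G_{23}>1$ in this case, and $1+G_{13}-G_{23}\ge1>0$) both are products of nonnegative numbers, hence nonnegative.

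I expect the only delicate points to be bookkeeping: (i) correctly handling the boundary $G_{13}+G_{23}=1$ and potential ties among the pairs, where the hypothesis $G_{23}>0$ is precisely what rules out a degenerate non-unique configuration; and (ii) uniqueness of the optimal scale $s$ under the residual-profit objective, dispatched by the single-peakedness argument of \autoref{p:compstatictotalequity}. The substantive content is the collapse of the key comparison to $(G_{13}+G_{23}-1)^2>0$; everything else is routine algebra.
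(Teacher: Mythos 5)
Your proposal is correct and follows essentially the same route as the paper: reduce to maximizing the balanced-equity constant $c$ via \autoref{l:activesetopt}, compute $c=G_{ij}s/2$ for pairs and the explicit three-agent formula for the full set, and observe that the comparison collapses to $(G_{13}+G_{23}-1)^2>0$. Your handling of the singleton case (via $c=0$) and of the share ordering (explicit differences rather than the balance equations) are cosmetic variations, and your treatment of the optimal scale $s$ under the residual-profit objective is if anything more explicit than the paper's.
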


We defer proofs to the end of this section. The first part of the proposition states the optimal allocation is unique, or equivalently there is a unique best active set for all networks. The second part of the proposition characterizes the active set under the optimal allocation. All agents are active precisely when each agent's degree is greater than the weight of the opposing link, and otherwise only the two higher-degree agents are active. Intuitively, it must be the case that the combination of the two lower-weight links provides complementarities at least as strong  as the highest-weight link. Otherwise, the principal prefers to give equity only to the two higher-degree agents.

The final part of the proposition describes which agents receive the most equity. Under any optimal allocation, equity shares are increasing in degree. Intuitively, the weights on links originating from agent $1$  dominate those originating from agent $2$, which in turn dominate links from agent $3$. From the balanced neighborhood equity characterization of the optimal equity incentives, we expect that agents with stronger links receive more equity.\footnote{Roughly speaking, an agent $i$'s degree times average equity in $i$'s neighborhood is equal to a constant. So agents with higher degree need to have neighbors with lower equity. With three agents this means that higher degree implies higher equity.}

We now turn to comparative statics of the optimal equity allocation with respect to the weight of the weakest link $G_{23}$, fixing the weight $G_{13} > \frac12$. By \autoref{p:threeagentoptshareordering}, agent $3$ is active if and only if $G_{23}>1-G_{13}$. We will restrict to this interval to take comparative statics, as the optimal allocation does not depend on $G_{23}$ below the threshold $1-G_{13}$.

\begin{proposition}
        \label{p:compstatedgeweight}

Fix $G_{13} > \frac12$. When $G_{23} \in (1-G_{13},G_{13})$, we have
$$\frac{\partial \left(\frac{\optsharesagt{2}}{\optsharesagt{3}}\right)}{\partial G_{23}} < 0, \quad \text{ and } \quad \frac{\partial \left(\frac{\optsharesagt{1}}{\optsharesagt{3}}\right)}{\partial G_{23}} < 0. $$

Moreover, there exists a threshold $g^* \in (0,1)$, which is a function of $G_{13}$, such that
$$\frac{\partial \left(\frac{\optsharesagt{1}}{\optsharesagt{2}}\right)}{\partial G_{23}}  
  >0    \mbox{ if } G_{23} < g^*(G_{13}) \quad \text { and } \quad
  \frac{\partial \left(\frac{\optsharesagt{1}}{\optsharesagt{2}}\right)}{\partial G_{23}}   <0   \mbox{ if } G_{23} > g^*(G_{13})
.$$
\end{proposition}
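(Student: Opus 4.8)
The plan is to obtain explicit formulas for the three ratios $\optsharesagt{1}/\optsharesagt{3}$, $\optsharesagt{2}/\optsharesagt{3}$, and $\optsharesagt{1}/\optsharesagt{2}$ as functions of $G_{23}$ (with $G_{13}$ held fixed), and then differentiate. By \autoref{p:invariance}, these ratios are independent of $\beta$ and of the objective, and by \autoref{t:optsharescharacterization}(a) together with \autoref{p:threeagentoptshareordering}, on the interval $G_{23}\in(1-G_{13},G_{13})$ all three agents are active, so the optimal shares are proportional to the equity centrality vector $\subnetwork^{-1}\mathbf{1}$. In fact the closed forms are already recorded in the continued example in \autoref{sec:active}: up to a common positive scalar,
\[
\optsharesagt{1}=\frac{1+G_{13}-G_{23}}{2G_{13}},\qquad
\optsharesagt{2}=\frac{1+G_{23}-G_{13}}{2G_{23}},\qquad
\optsharesagt{3}=\frac{G_{13}+G_{23}-1}{2G_{13}G_{23}}.
\]
(One should re-derive these from $\network\optsharesvector = c\mathbf{1}$ to keep the section self-contained; this is a routine $3\times3$ linear solve.) Note $\optsharesagt{3}>0$ exactly on the stated interval, so all ratios are well-defined there.

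First I would compute $\optsharesagt{2}/\optsharesagt{3} = \dfrac{G_{13}(1+G_{23}-G_{13})}{G_{13}+G_{23}-1}$ and $\optsharesagt{1}/\optsharesagt{3} = \dfrac{G_{23}(1+G_{13}-G_{23})}{G_{13}+G_{23}-1}$, differentiate each in $G_{23}$, and check the sign of the numerator of the derivative. For the first ratio, writing $x=G_{23}$ and $g=G_{13}$, the numerator of $\partial_x\bigl(\optsharesagt{2}/\optsharesagt{3}\bigr)$ is a quadratic in $x$ whose sign I would show is negative throughout $(1-g,g)$ by evaluating it at the endpoints and using convexity/concavity (or by completing the square); the denominator $(g+x-1)^2$ is positive. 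The second ratio is handled the same way. The point to watch is that these are the only two ratios the proposition claims are \emph{monotone}, and the algebra should confirm an unambiguous sign given $g>\tfrac12$.

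Next I would treat $\optsharesagt{1}/\optsharesagt{2} = \dfrac{G_{23}(1+G_{13}-G_{23})}{G_{13}(1+G_{23}-G_{13})}$. Differentiating in $G_{23}$, the sign of the derivative equals the sign of a quadratic $q(G_{23};G_{13})$ in $G_{23}$ (after clearing the positive denominator $G_{13}^2(1+G_{23}-G_{13})^2$). I would show that $q$ changes sign exactly once on $(1-G_{13},G_{13})$: this follows if $q$ is, say, positive at the left endpoint $G_{23}=1-G_{13}$ and negative at the right endpoint $G_{23}=G_{13}$ (or vice versa) together with $q$ having at most one root in the interval (monotonicity of $q$, i.e.\ a sign condition on $q'$, or a discriminant/interlacing argument). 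The threshold $g^*(G_{13})$ is then defined as the unique root of $q(\cdot;G_{13})=0$ in the interval, and $g^*\in(0,1)$ since it lies in $(1-G_{13},G_{13})\subseteq(0,1)$. Because $\optsharesagt{1}/\optsharesagt{2}$ is smooth and its derivative is positive before $g^*$ and negative after, this gives exactly the claimed dichotomy.

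The main obstacle is purely computational: verifying the endpoint signs and the single-crossing property of the quadratic $q$ for the ratio $\optsharesagt{1}/\optsharesagt{2}$, uniformly in $G_{13}>\tfrac12$. I expect the endpoint evaluations to factor nicely (for instance $q$ at $G_{23}=1-G_{13}$ and at $G_{23}=G_{13}$ should each reduce to a product of terms whose signs are transparent given $G_{13}>\tfrac12$ and $G_{13}<1$), and the single-crossing to follow either from $q$ being monotone on the interval or from the fact that a quadratic with opposite signs at the endpoints has exactly one root between them. No conceptual difficulty beyond careful bookkeeping is anticipated; the substantive input — that all three agents are active and shares equal scaled equity centralities — is already in hand from \autoref{t:optsharescharacterization} and \autoref{p:threeagentoptshareordering}.
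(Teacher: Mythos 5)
Your plan is correct and follows essentially the same route as the paper's proof: substitute the explicit equity-centrality formulas for the three shares (valid on the stated interval since all three agents are active by \autoref{p:threeagentoptshareordering}), form the three ratios, differentiate in $G_{23}$, and sign the resulting numerators, with the threshold $g^*(G_{13})=\sqrt{1-G_{13}}\left(\sqrt{2}-\sqrt{1-G_{13}}\right)$ arising as the unique root of the relevant quadratic in $(1-G_{13},G_{13})$ (your endpoint-sign argument does verify this, using $G_{13}>\tfrac12$). One trivial correction: the numerator of $\partial_{G_{23}}\left(\optsharesagt{2}/\optsharesagt{3}\right)$ is the constant $2G_{13}(G_{13}-1)$ rather than a quadratic, which only makes that step easier.
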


Under the success probability objective, all equity is allocated and there is no  question of how the sum of shares changes; in that case, an immediate implication of \autoref{p:compstatedgeweight} is that the equity of agent $2$ is decreasing in $G_{23}$ for small $G_{23}$, while the equity of agent $1$ is decreasing in $G_{23}$ for large $G_{23}$.

\begin{corollary}
\label{c:threeagtequityredn}
In the success probability problem, increasing $G_{23}$ decreases agent $2$'s optimal share $\sigma^*_2$  when all agents are active and $G_{23} < g^*(G_{13})$. Increasing  $G_{23}$ decreases agent $1$'s optimal share $\sigma^*_1$  when $G_{23} > g^*(G_{13})$. 
\end{corollary}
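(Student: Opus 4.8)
The plan is to obtain \autoref{c:threeagtequityredn} as a direct consequence of \autoref{p:compstatedgeweight}, using the single extra ingredient that under the success probability objective the shares satisfy the budget identity $\optsharesagt{1}+\optsharesagt{2}+\optsharesagt{3}=1$. Throughout I work on the interval of $G_{23}$ on which all three agents are active, which by \autoref{p:threeagentoptshareordering} (together with the standing normalization $G_{13}\ge G_{23}$ and the assumption $G_{13}>\tfrac12$ under which $g^*$ is defined) is exactly $(1-G_{13},\,G_{13})$. On this interval all three optimal shares are strictly positive and vary differentiably with $G_{23}$, so the ratios appearing in \autoref{p:compstatedgeweight} — and their reciprocals — are well-defined and differentiable, and a reciprocal of a strictly decreasing positive function is strictly increasing.

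For the first claim, I divide the identity $\optsharesagt{1}+\optsharesagt{2}+\optsharesagt{3}=1$ by $\optsharesagt{2}$ to get
\[
\frac{1}{\optsharesagt{2}} \;=\; \frac{\optsharesagt{1}}{\optsharesagt{2}} \;+\; 1 \;+\; \left(\frac{\optsharesagt{2}}{\optsharesagt{3}}\right)^{-1}.
\]
By \autoref{p:compstatedgeweight}, $\optsharesagt{2}/\optsharesagt{3}$ is strictly decreasing in $G_{23}$, so its reciprocal is strictly increasing; and $\optsharesagt{1}/\optsharesagt{2}$ is strictly increasing whenever $G_{23}<g^*(G_{13})$. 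Hence the right-hand side, and therefore $1/\optsharesagt{2}$, is strictly increasing on $(1-G_{13},g^*(G_{13}))$, so $\optsharesagt{2}$ is strictly decreasing there. For the second claim I divide the same identity by $\optsharesagt{1}$:
\[
\frac{1}{\optsharesagt{1}} \;=\; 1 \;+\; \left(\frac{\optsharesagt{1}}{\optsharesagt{2}}\right)^{-1} \;+\; \left(\frac{\optsharesagt{1}}{\optsharesagt{3}}\right)^{-1}.
\]
Again by \autoref{p:compstatedgeweight}, $\optsharesagt{1}/\optsharesagt{3}$ is strictly decreasing in $G_{23}$ (so its reciprocal is strictly increasing), and $\optsharesagt{1}/\optsharesagt{2}$ is strictly decreasing when $G_{23}>g^*(G_{13})$ (so its reciprocal is strictly increasing there). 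Thus $1/\optsharesagt{1}$ is strictly increasing, hence $\optsharesagt{1}$ is strictly decreasing, for $G_{23}>g^*(G_{13})$.

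There is essentially no obstacle here beyond \autoref{p:compstatedgeweight} itself: the only things to verify are the bookkeeping facts that all three shares are strictly positive on the active interval (which is precisely the content of \autoref{p:threeagentoptshareordering}) so that dividing by shares and inverting ratios is legitimate, and the elementary monotonicity of reciprocals. The genuinely substantive inputs — the signs of $\partial(\optsharesagt{2}/\optsharesagt{3})/\partial G_{23}$ and $\partial(\optsharesagt{1}/\optsharesagt{3})/\partial G_{23}$ and the existence of the threshold $g^*(G_{13})$ controlling the sign of $\partial(\optsharesagt{1}/\optsharesagt{2})/\partial G_{23}$ — are all borrowed wholesale from \autoref{p:compstatedgeweight}, so the corollary is purely a matter of combining those ratio monotonicities with the budget constraint.
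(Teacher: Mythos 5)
Your proof is correct and is essentially the argument the paper intends: the paper gives no separate proof of the corollary, asserting only that it is an ``immediate implication'' of \autoref{p:compstatedgeweight} once one notes that under the success probability objective the shares sum to one, and your decomposition of $1/\optsharesagt{2}$ and $1/\optsharesagt{1}$ into sums of the (reciprocals of the) monotone ratios is precisely the bookkeeping that makes that implication rigorous.
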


The first part of the result is more counterintuitive: it says that increasing the strength of an agent's collaborations can decrease the agent's relative equity share. This is what happens to agent $2$ when $G_{23}$ is increased starting from a low level. Once $G_{23}$ gets large enough, however, the equity given to agent $2$ begins increasing; this is financed by taking equity from agent $1$, whose stake is then decreasing in this regime by the second part of the corollary.

\subsection{Proofs for three-agent networks}

We now prove the two propositions in this section.

\begin{proof}[Proof of \autoref{p:threeagentoptshareordering}]
As a preliminary, we prove any optimal active set has at least two agents. Suppose only agent $i$ were active. Let the optimal equity allocation for such an agent be $\sigma^{*}$. The principal then maximizes the output $(1-\sigma)P(\sigma)$, so $\sigma^{*}$ solves $P(\sigma^{*}) = (1-\sigma^{*})P'(\sigma^{*})$ for the residual profits objective or $\sigma^{*} = 1$ for the success probability objective. But, for any $\beta > 0$, the principal can strictly increase residual profits by giving agents $1$ and $2$ each shares $\sigma_1=\sigma_2=\sigma^{*}/2$ or increase the success probability by giving shares $\sigma_1=\sigma_2=\frac12$, which gives a contradiction.

We claim that the unique active set is all three agents if and only if $G_{13} + G_{23} > 1$. If all agents are active at an optimal allocation, then \autoref{t:optsharescharacterization} characterizes the optimal shares as
$$\optsharesagt{1} = \frac{1+G_{13} - G_{23}}{2G_{13}} c, \quad\optsharesagt{2} = \frac{1+G_{23} - G_{13}}{2G_{23}} c, \quad \optsharesagt{3} = \frac{G_{23} + G_{13} - 1}{2G_{23}G_{13}} c,$$
for some constant $c$. A necessary condition for $\{1,2,3\}$ to be an optimal active set is $\sigma_3^*>0$, which is equivalent to $G_{13} + G_{23} > 1$. 

To prove the claim, it remains to show this condition is sufficient: if $G_{13} + G_{23} > 1$, then the unique active set is all three agents. Fix an optimal allocation $\optsharesvector$ and let $s = \sum_i \sigma_i^*$ be the corresponding sum of shares. By \autoref{l:activesetopt}, this optimal allocation must maximize the constant $c$ in the expression $\subnetwork \sharesvector = c \mathbf{1}$ among feasible allocations with $\sum_i \sigma_i= s$.

First suppose the active set is $\{1,2\}$. By \autoref{t:optsharescharacterization}, the two agents receive equal equity shares $\sharesagt{1} = \sharesagt{2} = s/2$ and we obtain $G \sharesvector = s/2$.

Suppose the principal instead allocated $s$ shares among all three agents so that the balanced neighborhood equity condition is satisfied. This requires
$${\sigma}_{1} = \frac{1+G_{13} - G_{23}}{2G_{13}} \cdot c, {\sigma}_{2} = \frac{1+G_{23} - G_{13}}{2G_{23}} \cdot c, {\sigma}_{3} = \frac{G_{23} + G_{13} - 1}{2G_{23}G_{13}} \cdot c.$$
Since $G_{13} + G_{23} > 1$, the three shares ${\sigma}_{1},{\sigma}_{2},{\sigma}_{3}>0$, so we have defined a feasible equity allocation. Combining the expressions for the $\sigma_i$,
$$c = \frac{2G_{13}G_{23}s}{2(G_{23}+G_{13}) - 1 -(G_{13}-G_{23})^{2}} > 0.$$

It remains to check that this allocation gives a higher constant $c$ than the allocation $\sigma_1=\sigma_2=s/2$ and $\sigma_3 =0$ with two active agents. This holds if
    \begin{equation*}
    \begin{split}
        \frac{2G_{13}G_{23}s}{2(G_{23}+G_{13}) - 1 -(G_{13}-G_{23})^{2}} > s/2, \\
        \iff \left(G_{13} + G_{23} - 1\right)^{2} > 0,
    \end{split}
    \end{equation*}
and the bottom inequality holds since $G_{13} + G_{23} > 1$. This shows that there is a unique optimal allocation and the active set at this allocation is $\{1,2,3\}$.

Finally, we prove that $\optsharesagt{1} \geq \optsharesagt{2} \geq \optsharesagt{3}$ by considering both possible active sets. When all agents are active, by \autoref{t:optsharescharacterization}, the optimal shares satisfy
$$\optsharesagt{2} + G_{13}\optsharesagt{3} = \optsharesagt{1} + G_{23}\optsharesagt{3} = G_{13}\optsharesagt{1} + G_{23}\optsharesagt{2},$$
where $\optsharesagt{1},\optsharesagt{2},\optsharesagt{3} > 0$. We have assumed $G_{23}>0$, so $\optsharesagt{1} \geq \optsharesagt{2} \geq \optsharesagt{3}$.

 When only agents $1$ and $2$ are active, by \autoref{t:optsharescharacterization}, the two active agents $i$ and $j$ receive equal shares $\sigma^*_i=\sigma^*_j = \sigma^*$ while agent $3$ receives no shares.
\end{proof}

\begin{proof}[Proof of \autoref{p:compstatedgeweight}]
    Since we are working in the range of values where the optimal active set is all three agents, \autoref{t:optsharescharacterization} implies

$$\frac{\optsharesagt{2}}{\optsharesagt{3}} = \frac{G_{13}\left(G_{23} + 1 - G_{13}\right)}{G_{23} + G_{13} - 1}. $$

Taking the derivative of the ratio of the optimal shares with respect to $G_{23}$, we get that

$$\frac{\partial \left(\frac{\optsharesagt{2}}{\optsharesagt{3}}\right)}{\partial G_{23}} = -\frac{2G_{13}(1-G_{13})}{\left(G_{23} + G_{13} - 1\right)^{2}} < 0.$$

Turning to the relative shares between agent $1$ and $3$, \autoref{t:optsharescharacterization} implies

$$\frac{\optsharesagt{1}}{\optsharesagt{3}} = \frac{G_{23}\left(G_{13} + 1 - G_{23}\right)}{G_{23} + G_{13} - 1}.$$

Taking the derivative of this ratio with respect to $G_{23}$, we get that

$$\frac{\partial \left(\frac{\optsharesagt{1}}{\optsharesagt{3}}\right)}{\partial G_{23}} = \frac{(G_{13}-1)(1+G_{13}-G_{23}) - G_{23}(G_{23} + G_{13}-1)}{\left(G_{23} + G_{13} - 1\right)^{2}} < 0,$$
where the inequality follows from \autoref{p:threeagentoptshareordering}. Finally, we have that

$$\frac{\optsharesagt{1}}{\optsharesagt{2}} = \frac{(1+G_{13} - G_{23})G_{23}}{(1+G_{23} - G_{13})G_{13}}.$$

Taking the derivative with respect to $G_{23}$ and analyzing the quadratic function in $G_{23}$ in the numerator gives the result with threshold  $$g^*(G_{13})= \sqrt{1-G_{13}}\left(\sqrt{2} - \sqrt{1-G_{13}}\right).\eqno \qedhere$$
\end{proof}

\section{Spectral radius maximization}
\label{a:spectral_radius}

We now state and proves the connection to the spectral radius maximization problem discussed in \Cref{s:spectral_radius}. Recall the spectral radius of a matrix $\mathbf{M}$, which we write as $\rho(\mathbf{M})$, is the largest magnitude of an eigenvalue of $\mathbf{M}$.

\begin{proposition}

    \label{p:activesetimpspecmax}
    If $\sharesvector^*$ is optimal for the success probability objective, then $\sharesmatrix^* = \operatorname{diag}(\sharesvector^*)$ maximizes the spectral radius $\rho(\sharesmatrix \network )$ among matrices $\sharesmatrix=\operatorname{diag}(\sharesvector)$ such that $\sharesvector$ is a feasible allocation.

\end{proposition}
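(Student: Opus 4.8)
The plan is to use the invariance of the optimal allocation to the complementarity parameter (\autoref{p:invariance}) to turn the claim into a statement about the behavior of equilibrium team performance as $\beta\to\infty$. Since $P$ is strictly increasing, the success probability problem is equivalent to choosing a feasible $\sharesvector$ to maximize $Y^*(\sharesvector,\beta):=Y(\mathbf a^*)$, where $\mathbf a^*$ is the unique equilibrium of \autoref{p:uniqueeq}; I write the dependence on $\beta$ explicitly. By \autoref{p:invariance}, whose constant $k$ equals $1$ under the success probability objective, $\sharesvector^*$ is optimal for \emph{every} $\beta>0$, so $Y^*(\sharesvector^*,\beta)\ge Y^*(\sharesvector,\beta)$ for all feasible $\sharesvector$ and all $\beta>0$. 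As noted in the discussion following \autoref{t:optsharescharacterization}, part (a) of that theorem says $\sharesmatrix^*\network\,\sharesvector^*=c\,\sharesvector^*$ with $c>0$; since $\sharesvector^*\ne\mathbf 0$, $c$ is an eigenvalue of $\sharesmatrix^*\network$, so $\rho(\sharesmatrix^*\network)\ge c>0$.

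The key step is an asymptotic fact: for each fixed feasible $\sharesvector$ with $\rho(\sharesmatrix\network)>0$, one has $\beta\,P'\!\big(Y^*(\sharesvector,\beta)\big)\,\rho(\sharesmatrix\network)\to 1$ as $\beta\to\infty$. From the proof of \autoref{p:uniqueeq} this quantity is always strictly below $1$, so it is enough to exclude a subsequential limit below $1$. I do this with two observations. First, $Y^*(\sharesvector,\beta)\to\infty$: otherwise $Y^*$ stays bounded by some $B$ along a subsequence, so $P'(Y^*)\ge P'(B)>0$ (here $P'$ is positive everywhere and nonincreasing, since $P$ is strictly increasing and concave), contradicting $P'(Y^*)<1/(\beta\rho(\sharesmatrix\network))\to0$. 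Second, if $\beta P'(Y^*)\rho(\sharesmatrix\network)\le 1-\eta$ along a subsequence, then with $t:=\beta P'(Y^*)\in\big(0,(1-\eta)/\rho(\sharesmatrix\network)\big]$ the resolvent $[\mathbf I-t\sharesmatrix\network]^{-1}$ is well-defined (the spectral radius $t\rho(\sharesmatrix\network)\le 1-\eta<1$) and, by continuity on this compact set of $t$'s, bounded in norm; hence $\mathbf a^*=[\mathbf I-t\sharesmatrix\network]^{-1}P'(Y^*)\sharesvector\to\mathbf 0$ and $Y^*=Y(\mathbf a^*)\to Y(\mathbf 0)=0$, contradicting the first observation. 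So $\beta P'\!\big(Y^*(\sharesvector,\beta)\big)=\big(1-\delta(\sharesvector,\beta)\big)/\rho(\sharesmatrix\network)$ with $\delta(\sharesvector,\beta)\to 0$ as $\beta\to\infty$.

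With this in hand the argument concludes by contradiction. Let $\hat\sharesvector$ be any feasible allocation and set $\hat\rho:=\rho(\hat\sharesmatrix\network)$ and $\rho^*:=\rho(\sharesmatrix^*\network)>0$; if $\hat\rho=0$ there is nothing to show, so assume $\hat\rho>0$ and, toward a contradiction, $\hat\rho>\rho^*$. Applying the asymptotic fact to $\hat\sharesvector$ and to $\sharesvector^*$ gives $\beta P'\!\big(Y^*(\hat\sharesvector,\beta)\big)=(1-\hat\delta)/\hat\rho$ and $\beta P'\!\big(Y^*(\sharesvector^*,\beta)\big)=(1-\delta^*)/\rho^*$ with $\hat\delta,\delta^*\to0$; since $1/\hat\rho<1/\rho^*$, for all sufficiently large $\beta$ we get $(1-\hat\delta)/\hat\rho<(1-\delta^*)/\rho^*$, i.e.\ $P'\!\big(Y^*(\hat\sharesvector,\beta)\big)<P'\!\big(Y^*(\sharesvector^*,\beta)\big)$. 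As $P'$ is nonincreasing, this forces $Y^*(\hat\sharesvector,\beta)>Y^*(\sharesvector^*,\beta)$, contradicting the optimality of $\sharesvector^*$ for that $\beta$. Hence $\rho(\hat\sharesmatrix\network)\le\rho(\sharesmatrix^*\network)$ for every feasible $\hat\sharesvector$, which is the assertion. I expect the asymptotic fact of the second paragraph — in particular the resolvent-boundedness argument controlling $\mathbf a^*$ as $\beta\to\infty$ — to be the only genuinely delicate step; the remainder is bookkeeping built on \autoref{p:uniqueeq}, \autoref{t:optsharescharacterization}, and \autoref{p:invariance}.
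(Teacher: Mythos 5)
Your proof is correct, and it reaches the conclusion by a genuinely different route from the paper's. The paper's proof passes to the alternate deterministic model with $P(Y)=Y$, where equilibrium actions are $[\mathbf I-\beta\sharesmatrix\network]^{-1}\sharesvector$, and sends $\beta$ up to the \emph{finite} threshold $1/\rho(\sharesmatrix'\network)$ associated with a hypothetical spectral-radius maximizer $\sharesvector'$: there output under $\sharesvector'$ blows up while output under $\sharesvector^*$ stays bounded (since $\rho(\sharesmatrix^*\network)<\rho(\sharesmatrix'\network)$), so some $\beta_1$ makes $\sharesvector'$ strictly better, and \autoref{p:invariance} transports the contradiction back to $\beta_0$. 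You instead stay in the binary-outcome model and send $\beta\to\infty$, using the endogenous damping $P'(Y^*)\to 0$ to show $\beta P'(Y^*)\rho(\sharesmatrix\network)\to 1$ for any fixed allocation with positive spectral radius, and then reading off the ranking of the $Y^*$'s from the monotonicity of $P'$. What the paper's route buys is brevity: in the linear model the divergence at $\beta\uparrow 1/\rho$ is immediate (at the cost of an ``analysis applies essentially unchanged'' appeal to the alternate model, and of implicitly treating $\sharesvector'$ as an eigenvector of $\sharesmatrix'\network$). What your route buys is that it works directly with the stated objective and concave $P$, with the resolvent-boundedness step doing the work that linearity does for the paper; amusingly, your argument is closer to the informal description the paper itself gives in \Cref{s:spectral_radius} (``when $\beta$ is large enough\dots''). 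One small point to tighten: in your second observation, $Y$ itself depends on $\beta$ through the quadratic term $\tfrac{\beta}{2}(\mathbf a^*)^T\network\mathbf a^*$, so $\mathbf a^*\to\mathbf 0$ alone does not give $Y(\mathbf a^*)\to 0$; you need $\beta\|\mathbf a^*\|^2\to 0$, which does follow from your bounds since $\|\mathbf a^*\|=O(P'(Y^*))$ and $\beta P'(Y^*)$ is bounded by $(1-\eta)/\rho$ on the subsequence, so $\beta\|\mathbf a^*\|^2=O(P'(Y^*))\to 0$. With that line added, the argument is complete.
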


\begin{proof}

For the proof, we consider the alternate model (discussed in \Cref{s:objectives}) in which the team has a deterministic output $P(Y)$ and the principal allocates equity shares $\sigma_i P(Y)$ to each agent $i$. We moreover assume that $P(Y) = Y$. Our analysis applies essentially unchanged, with equilibrium actions under shares $\sharesvector$ now given by $$\eqlbactions = [I-\beta\sharesmatrix \network]^{-1} \sharesvector.$$ In particular, the optimal allocations are unchanged and \Cref{p:invariance} continues to apply in this alternate model.

    Suppose $\boldsymbol{\sigma}^{*}$ is optimal for the output objective for some complementarity parameter $\beta_0$ but does not maximize the spectral radius $\rho(\sharesmatrix \network )$. Let $\boldsymbol{\sigma}'$ maximize the spectral radius $\rho(\sharesmatrix \network )$. Then, by assumption, $\rho(\boldsymbol{\Sigma}^{*}\mathbf{G}) < \rho(\boldsymbol{\Sigma}'\mathbf{G})$. Recall that given shares $\sharesvector$, the equilibrium actions are $$\eqlbactions = [I-\beta\sharesmatrix \network]^{-1} \sharesvector.$$Since $\sharesvector$ is the right eigenvector of $\sharesmatrix \network$ with largest eigenvalue, we have $$\eqlbactions = \frac{1}{1-\beta\rho(\sharesmatrix \network)}\sharesvector.$$
    
    Taking the limit of $\beta$ to $1/\rho(\boldsymbol{\Sigma}'\mathbf{G})$ from below, we have $\|\eqlbactions\|_2 \rightarrow \infty$ under the equity allocation $\sharesvector'$ but $\|\eqlbactions\|_2$ remains bounded under the equity allocation $\optsharesvector$. So we can choose $\beta_1$ such that the team performance with shares $\boldsymbol{\sigma}'$ is strictly greater than the team performance with shares $\boldsymbol{\sigma}^{*}$. This implies that  $\boldsymbol{\sigma}^{*}$ does not maximize the output objective with complementarity parameter $\beta_1$. \Cref{p:invariance} implies that $\boldsymbol{\sigma}^{*}$ does not maximize the success probability with complementarity parameter $\beta_0$ either, giving a contradiction. Thus, any solution to the principal's problem maximizes the spectral radius $\rho(\sharesmatrix \network )$. 
\end{proof}
\end{document}